\newcounter{alg}
\DeclareMathOperator*{\argmin}{arg\min}
\newtheorem{theorem}{Theorem}[section]
\newtheorem{claim}[theorem]{Claim}
\newtheorem{definition}{Definition}[section]
\newtheorem{example}{Example}[section]
\def\squarebox#1{\hbox to #1{\hfill\vbox to #1{\vfill}}}
\def\eod{\vrule height 6pt width 5pt depth 0pt}
\newenvironment{proof}{\noindent {\bf Proof:} \hspace{.677em}}
                      {\hspace*{\fill}{\eod}}
\newcommand\Eq{\mathcal{E}}
\newcommand\G{\mathcal{G}}
\newcommand{\tuple}[1]{\langle #1  \rangle}
\begin{document}

\title{Scheduling Games with Machine-Dependent Priority Lists
\footnote{A preliminary version appeared in the proceedings of the $15$th Conference on Web and Internet Economics (WINE), December 2019.}}

\author{Vipin Ravindran Vijayalakshmi\thanks{Chair of Management Science, RWTH Aachen, Germany, E-mail: vipin.rv@oms.rwth-aachen.de. This work is supported by the German research council (DFG) Research Training Group 2236 UnRAVeL.} \and Marc Schr{\"o}der \thanks{School of Business and Economics, Maastricht University, Netherlands, E-mail: m.schroder@maastrichtuniversity.nl} \and
Tami Tamir\thanks{School of Computer Science, The Interdisciplinary Center, Israel. E-mail:tami@idc.ac.il. This work is supported by The Israel Science Foundation (ISF). Grant No. 1036/17.}}

\date{}
\maketitle

\begin{abstract}
We consider a scheduling game on parallel related machines, in which jobs try to minimize their completion time by choosing a machine to be processed on. Each machine uses an individual priority list to decide on the order according to which the jobs on the machine are processed. 
We prove that it is NP-hard to decide if a pure Nash equilibrium exists and characterize four classes of instances in which a pure Nash equilibrium is guaranteed to exist. For each of these classes, we give an algorithm that computes a Nash equilibrium, we prove that best-response dynamics converge to a Nash equilibrium, and we bound the inefficiency of Nash equilibria with respect to the makespan of the schedule and the sum of completion times. In addition, we show that although a pure Nash equilibrium is guaranteed to exist in instances with identical machines,
it is NP-hard to approximate the best Nash equilibrium with respect to both objectives. 
\end{abstract}

\section{Introduction}
Scheduling problems have traditionally been studied from a centralized point of view in which the goal is to find an assignment of jobs to machines so as to minimize some global objective function. Two of the classical results are that Smith's rule, i.e., schedule jobs in decreasing order according to their ratio of weight over processing time, is optimal for single machine scheduling with the sum of weighted completion times objective \cite{smith1956various}, and list scheduling, i.e., greedily assign the job with the highest priority to a free machine, yields a $2$-approximation for identical machines with the minimum makespan objective \cite{Gra66}. Many modern systems provide service to multiple strategic users, whose individual payoff is affected by the decisions made by others. As a result, non-cooperative game theory has become an essential tool in the analysis of job-scheduling applications. The jobs are controlled by selfish users who independently choose which resources to use. The resulting job-scheduling games have by now been widely studied and many results regarding the efficiency of equilibria in different settings are known.

A particular focus has been placed on finding coordination mechanisms \cite{christodoulou2004coordination}, i.e., local scheduling policies, that induce a good system performance. In these works it is common to assume that ties are broken in a consistent manner (see, e.g., Immorlica et al.~\cite{immorlica2009coordination}), or that there are no ties at all (see, e.g., Cole et al.~\cite{cole2015decentralized}). In practice, there is no real justification for this assumption, except that it avoids subtle difficulties in the analysis. In this paper we relax this restrictive assumption and consider the more general setting in which machines have arbitrary individual priority lists. That is, each machine schedules those jobs that have chosen it according to its priority list. The priority lists are publicly known to the jobs.  

In this paper we analyze the effect of having  machine-dependent priority lists on the corresponding job-scheduling game. We study the existence of pure Nash equilibria (NE), the complexity of identifying whether an NE profile exists, the complexity of calculating an NE, in particular a good one, and the equilibrium inefficiency. 



\subsection{The Model}
An instance of a {\em scheduling game with machine-dependent priority lists} is given by a tuple $G=\tuple{N,M,(p_i)_{i \in N},(s_j)_{j \in M},(\pi_j)_{j \in M}}$, where $N$ is a finite set of $n\geq 1$ jobs, $M$ is a finite set of $m\geq 1$ machines, $p_i\in\mathbb{R}_+$ is the processing time of job $i\in N$, $s_j\in\mathbb{R}_+$ denotes the speed of the machine $j\in M$, and $\pi_j:N \rightarrow \{1,\ldots,n\}$ is the priority list of machine $j\in M$. 

A strategy profile $\sigma=(\sigma_i)_{i\in N}\in M^N$ assigns a machine $\sigma_i\in M$ to every job $i\in N$. Given a strategy profile $\sigma$, the jobs are processed according to their order in the machines' priority lists. 
The set of jobs that delay $i\in N$ in $\sigma$ is denoted by $B_{i}(\sigma)= \{i' \in N| \sigma_{i'}=\sigma_i \wedge \pi_{\sigma_i}(i') \leq \pi_{\sigma_i}(i)\}$. Note that job $i$ itself also belongs to $B_{i}(\sigma)$.  
Let $p_i(\sigma)=\sum\limits_{i'\in B_{i}(\sigma)}p_{i'}$. The cost of job $i\in N$ is equal to its completion time in $\sigma$, given by
$C_i(\sigma)=p_i(\sigma)/s_{\sigma_i}.$

Each job chooses a strategy so as to minimize its costs. A strategy profile $\sigma\in \Sigma$ is a {\em pure Nash equilibrium (NE)} if for all $i\in N$ and all $\sigma_i^{\prime}\in \Sigma_i$, 
we have that $C_i(\sigma)\leq C_i(\sigma_i^{\prime},\sigma_{-i})$. Let $\Eq(G)$ denote the set of Nash equilibria for a given instance $G$. We would like to remark that $\Eq(G)$ may be empty.

For a strategy profile $\sigma$, let $C(\sigma)$ denote the cost of $\sigma$. The cost is defined with respect to some objective, e.g., the makespan, i.e., $C_{max}(\sigma):=\max_{i\in N} C_i(\sigma)$, or the sum of completion times, i.e., $\sum_{i \in N}C_i(\sigma)$. It is well known that decentralized decision-making may lead to sub-optimal solutions from the
point of view of the society as a whole. For a game $G$, let $P(G)$ be the set of feasible profiles of $G$. We denote by $OPT(G)$ the cost of a social optimal solution, i.e., $OPT(G)=\min_{\sigma \in P(G)} C(\sigma)$. We quantify the
inefficiency incurred due to self-interested behavior according to the
\emph{price of anarchy} (PoA) \cite{Koutsoupias:1999:WE:1764891.1764944}, and \emph{price of
stability} (PoS) \cite{AD+08}. The PoA is the worst-case inefficiency of a pure Nash equilibrium, while the PoS measures the best-case inefficiency of a pure Nash equilibrium. 

\begin{definition}
\label{def:ineff}
Let $\G$ be a family of games, and let $G$ be a game in $\mathcal{G}$.
Let $\Eq(G)$ be the set of pure Nash equilibria of the game $G$. Assume that $\Eq(G) \neq \emptyset$.
\begin{itemize}
\item The {\em price of anarchy} of $G$ is the ratio between the
\emph{maximum} cost of an NE and the social optimum of
$G$, i.e.,
$\mbox{PoA}(G) = \max\limits_{\sigma\in \Eq(G)} C(\sigma)/OPT(G)$.
The {\em price of anarchy} of $\mathcal{G}$
is $\mbox{PoA}(\mathcal{G}) = sup_{ G\in \mathcal{G}}\mbox{PoA}(G)$.
\item
The {\em price of stability} of $G$ is the ratio between the
\emph{minimum} cost of an NE and the social optimum of
$G$, i.e.,
$\mbox{PoS}(G) = \min\limits_{\sigma\in \Eq(G)} C(\sigma)/OPT(G)$.
The {\em price of stability} of $\mathcal{G}$ is
$\mbox{PoS}(\mathcal{G}) = sup_{ G\in \mathcal{G}}\mbox{PoS}(G)$.
\end{itemize}
\end{definition}

\subsection{Our Contribution}
We first show that a pure Nash equilibrium in general need not exist, and use this to show that it is NP-complete to decide whether a particular game has a pure Nash equilibrium. 
We then provide a characterization of instances in which a pure Nash equilibrium is guaranteed to exist. Specifically, existence is guaranteed if the game belongs to at least one of the following four classes: $\G_1:$ all jobs have unit processing time, $\G_2:$ there are two machines, $\G_3:$ all machines have the same speed, and $\G_4:$ all machines have the same priority list. For all four of these classes, there is a polynomial time algorithm that computes a Nash equilibrium. In fact, for all four classes we prove that better-response dynamics converge to a Nash equilibrium. This characterization is tight in a sense that our inexistence example disobeys it in a minimal way: it describes a game on three machines, two of them having the same speed and the same priority list. We also show that the result for $\G_2$ cannot be extended for games with two unrelated machines. Another characterization we consider is the number of jobs in the instance. We present a game of $4$ jobs that has no pure NE, and show that every game of $3$ jobs admits an NE.

We analyze the efficiency of Nash equilibria by means of two different measures of efficiency: the makespan, i.e., the maximum completion time of a job, and the sum of completion times. For all four classes of games with a guaranteed pure Nash equilibrium, we provide tight bounds for the price of anarchy and the price of stability with respect to both measures. Our results are summarized in Table~\ref{tab:results}.

\begin{table*}[htbp]
 \begin{center}
  \begin{tabular}{|c||c|c|}
  \hline
  \multirow{2}{*}{Instance class $\char`\\$ Objective} & Makespan & Sum of Comp. Times\\
  & PoA/PoS & PoA/PoS \\
  \hline\hline
  $\G_1:$ Unit jobs  & $1$ & $1$ \\
  \hline
  $\G_2:$ Two machines & $(\sqrt{5}+1)/2$& $\Theta(n)$ \\
  \hline
  $\G_3:$ Identical machines & $~~2-1/m~~$ & $\Theta(n/m)$ \\
    \hline
  $\G_4:$ Global priority list & $\Theta(\log{m})$ & $\Theta(n)$ \\
  \hline
  \end{tabular}
 \end{center}
 \caption{Our results for the equilibrium inefficiency.}
 \label{tab:results}
 \end{table*}
 
 $(i)$ If jobs have unit processing times, we show that the price of anarchy is equal to $1$, which means that selfish behavior is optimal. $(ii)$ For two machines with speed $1$ and $s\leq 1$ respectively, we prove that the PoA and the PoS are at most $s+1$ if $s\le\frac{\sqrt{5}-1}{2}$, and $\frac {s+2}{s+1}$ if $s\ge\frac{\sqrt{5}-1}{2}$.
Moreover, our analysis is tight for all $s \le 1$. The maximal inefficiency, listed in Table \ref{tab:results}, is achieved for $s=\frac{\sqrt{5}-1}{2}$. In case the sum of completion times is considered as an objective, the price of anarchy can grow linearly in the number of jobs. $(iii)$ If machines have identical speeds, but potentially different priority lists, the price of anarchy with respect to the makespan is equal to $2-1/m$. The upper bound follows because every Nash equilibrium can be seen as an outcome of Graham's List-Scheduling algorithm. This generalizes a similar result by Immorlica et al.~\cite{immorlica2009coordination} for priorities based on shortest processing times first. The lower bound example shows the bound is tight, even with respect to the price of stability. For the sum of completion times objective, we show that the price of anarchy is at most $O(n/m)$, and provide a lower bound example for which the price of stability grows in the order of $O(n/m)$. $(iv)$ If there is a global priority list, and machines have arbitrary speeds, we show that the $\Theta(\log{m})$-approximation of List-Scheduling carry over for the makespan inefficiency, and the results for two machines carry over for the sum of completion times.

We conclude with results regarding the complexity of calculating a good NE. While a simple greedy algorithm can be used to compute an NE for an instance with identical machines (the class ${\cal G}_3$), we show that it is NP-hard to compute an NE schedule that approximates the best NE of a game in this class. Specifically, it is NP-hard to approximate the best NE with respect to the minimum makespan within a factor of $2-1/m-\epsilon$ for all $\epsilon>0$, and it is NP-hard to approximate the best NE with respect to the sum of completion times within a factor of $r$ for any constant $r>1$.


\subsection{Related Work}

Scheduling games were initially studied in the setting in which each machine processes its jobs in parallel so that the completion time of each job is equal to the makespan of the machine. The goal of these papers was to characterize the inefficiency of selfish behavior as measured by the price of anarchy \cite{Koutsoupias:1999:WE:1764891.1764944}. Most attention has been given to the makespan as a measure of efficiency. Czumaj and V\"ocking~\cite{czumaj2007tight} gave tight bounds on the price of anarchy for related machines, whereas Awerbuch et al.~\cite{awerbuch2006tradeoffs} and Gairing et al.~\cite{gairing2010computing} provided tight bounds for restricted machine settings. We refer to V\"ocking~\cite{Voc07} for an overview. These tight bounds grow with the number of machines and that is why Christodoulou et al.~\cite{christodoulou2004coordination} introduced the idea of using coordination mechanisms, i.e., local scheduling policies, to improve the price of anarchy. They studied the price of anarchy with priority lists based on longest processing times first. Immorlica et al.~\cite{immorlica2009coordination} generalized their results and studied several different scheduling policies, among which longest and shortest processing times first, in multiple scheduling settings. Both these two policies guarantee the existence of a pure Nash equilibrium in the related machine setting. These results are a special case of our result, as we prove the existence of a pure Nash equilibrium if there is a global priority list. For shortest processing times first, a pure Nash equilibrium is also guaranteed in the unrelated machines setting. Here, the set of Nash equilibria corresponds to the set of solutions of the Ibarra-Kim algorithm. A result that is also proven in Heydenreich et al.~\cite{heydenreich2007games}. Other (in)existence results are D\"urr and Nguyen~\cite{durr2009non}, who proved that a Nash equilibrium exists for two machines with a random order and balanced jobs, Azar et al.~\cite{azar2008almost}, who showed that for unrelated machines with priorities based on the ratio of a job's processing time to its faster processing time a Nash equilibrium need not exist, Lu and Yu~\cite{lu2012worst}, who proved that group-makespan mechanisms guarantees the existence of a Nash equilibrium, and Kollias~\cite{kollias2013nonpreemptive}, who showed that non-preemptive coordination mechanisms need not induce a pure Nash equilibrium.  

For the sum of weighted completion times, Correa and Queyranne~\cite{correa2012efficiency} proved a tight upper bound of 4 for restricted related machines with priority lists derived from Smith's rule. Cole et al.~\cite{cole2015decentralized} generalized the bound of 4 to unrelated machines with Smith's rule and proposed better scheduling policies. Hoeksma and Uetz~\cite{hoeksma2019price} gave a tighter bound for the more restricted setting in which jobs have unit weights and machines are related. Caragiannis et al.~\cite{caragiannis2017coordination} proposed a framework that uses price of anarchy results of Nash equilibria in scheduling games to come up with combinatorial approximation algorithms for the centralized problem.


Ackermann et al.~\cite{Ackermann:2007:UAC:1781894.1781903} were the first to study a congestion game with priorities. They proposed a model in which users with higher priority on a resource displace users with lower priorities such that the latter incur infinite cost. Closer to ours is Farzad et al.~\cite{DBLP:journals/cjtcs/FarzadOV08}, who studied priority based selfish routing for non-atomic and atomic users and analyzed the inefficiency of equilibria. Recently, Bil\'o and Vinci~\cite{bilo2020congestion} studied a congestion game with a global priority classes that can contain multiple jobs and characterize the price of anarchy as a function of the number of classes. Gourv{\`e}s et al.~\cite{Gourves:2015:CGC:2839125.2839192} studied capacitated congestion games to characterize the existence of pure Nash equilibria and computation of an equilibrium when they exist. Piliouras et al.~\cite{piliouras2016risk} assumed that the priority lists are unknown to the players a priori and consider different risk attitudes towards having a uniform at random ordering.

\section{Equilibrium Existence and Computation}
\label{sec:noNEinSW}
In this section we give a precise characterization of scheduling game instances that are guaranteed to have an NE. The conditions that we provide are sufficient but not necessary. A natural question is to decide whether a given game instance that does not fulfill any of the conditions has an NE. We show that answering this question is an NP-complete problem.

We first show that an NE may not exist, even with only three machines, two of which have the same speed and the same priority list.

\begin{example}
\label{example:noNEsched}
{\em
Consider the game $G^*$ with $5$ jobs,  $N = \{a,b,c,d,e\}$, and three machines, $M = \{M_1, M_2, M_3\}$, with $\pi_1=(a,b,c,d,e)$, and $\pi_2= \pi_3= (e,d,b,c,a)$.  The first machine has speed $s_1=1$ while the two other machines have speed $s_2=s_3=1/2$.
The job processing times are $p_a = 5, p_b = 4, p_c = 4.5, p_d = 9.25$, and $p_e=2$.

Job $a$ is clearly on $M_1$ in every NE. Therefore job $e$ is not on $M_1$ in an NE, as job $e$ is first on $M_2$ or $M_3$. Since these two machines have the same priority list and the same speed, we can assume w.l.o.g., that if an NE exists, then there exists an NE in which job $e$ is on $M_3$. We distinguish two different cases for job $d$, as given that $e$ is on $M_3$, $d$ prefers $M_2$ over $M_3$.
\begin{enumerate}
    \item Job $d$ is on $M_1$. Then as job $b$ has the highest remaining priority among $b$ and $c$ on all machines, job $b$ picks the machine with the lowest completion time, which is $M_2$, and job $c$ lastly is then on $M_1$. As a result, $d$ prefers $M_2$ (since $18.5< 18.75$) over $M_1$.
    \item Job $d$ is on $M_2$. Then as job $b$ has the highest remaining priority among $b$ and $c$ on all machines, job $b$ picks the machine with the lowest completion time, which is $M_1$, and job $c$ lastly is then on $M_3$. As a result, $d$ prefers $M_1$ (since $18.25< 18.5$) over $M_2$.
\end{enumerate}
Thus, the game $G^*$ has no pure Nash equilibrium.

}
\end{example}

We can use the above example to show that deciding whether a game instance has an NE is NP-complete by using a reduction from 3-bounded 3-dimensional matching. 
\begin{theorem}
\label{thm:hard:NEexistsSchedule}
Given an instance of a scheduling game, it is NP-complete to decide whether the game has an NE. 
\end{theorem}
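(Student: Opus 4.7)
The plan is to prove both membership and hardness. Membership in NP is immediate: given a candidate strategy profile $s$, one can check in polynomial time whether $s$ is a NE by comparing, for every job $i$ and every alternative machine $s_i'\in M$, the costs $\cost_i(s)$ and $\cost_i(s_i',s_{-i})$. For NP-hardness, I would reduce from 3-bounded 3-dimensional matching (3DM-3): given disjoint sets $X,Y,Z$ of size $q$ and a collection $T$ of triples in which each element appears in at most three triples, decide whether there exist $q$ triples that cover each element exactly once. I would construct a scheduling game $G(I)$ from an instance $I$ of 3DM-3 so that $G(I)$ admits a pure NE if and only if $I$ admits a perfect matching.

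The gadget construction would consist of three interacting pieces. First, for each triple $t_k\in T$ introduce a \emph{selection job} $J_k$ whose two best-response machines encode the binary decision ``$t_k$ is chosen'' versus ``$t_k$ is not chosen''. Second, for each element $u\in X\cup Y\cup Z$ introduce an \emph{element gadget} consisting of a few dedicated machines, auxiliary jobs, and priority lists calibrated so that the gadget is in a local equilibrium precisely when the number of selection jobs routed through $u$'s machines equals one; multiple coverage or no coverage leaves at least one job of the gadget wanting to deviate. Third, plant a scaled copy of the instance $G^*$ of Example~\ref{example:noNEsched} together with an \emph{anchor} job whose priority and routing are arranged so that the anchor stays out of the $G^*$-gadget exactly when all element gadgets are locally stable, and is forced into $G^*$ (reactivating the deviation cycle of Example~\ref{example:noNEsched}) as soon as any element gadget is unstable. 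The weights and delays are chosen with sufficient separation (e.g., geometrically spaced) so that deviations in one gadget dominate and do not affect best-response comparisons in another.

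Correctness then follows in two directions: if $I$ admits a perfect matching $M$, set $J_k$ to ``chosen'' iff $t_k\in M$; every element gadget then sits in its one-coverage equilibrium, the anchor stays out of the $G^*$-gadget, and the five jobs of $G^*$ can be placed in the stable configuration that exists when the anchor is absent, giving a NE of $G(I)$. Conversely, if no matching exists, some element is mis-covered; the corresponding gadget cannot be in equilibrium, which by construction forces the anchor into $G^*$ and reproduces the impossibility argument of Example~\ref{example:noNEsched}, so $G(I)$ has no NE. The main obstacle I anticipate is the parameter tuning: one must choose the weights, delays, and priority lists so that the binary encoding of selection jobs is faithful (no spurious intermediate best responses), that element gadgets flip between stable and unstable modes exactly at coverage one, and that the planted copy of $G^*$ behaves as in Example~\ref{example:noNEsched} despite the presence of auxiliary jobs elsewhere. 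Using the fact that $|T|\le 3q$ in 3DM-3, the reduction is clearly polynomial, completing the proof.
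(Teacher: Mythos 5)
Your NP-membership argument and your choice of source problem (3DM-3) match the paper, and your high-level strategy --- plant the no-NE game of Example~\ref{example:noNEsched} and arrange a ``trigger'' that activates it exactly when no perfect matching exists --- is exactly the idea the paper uses. However, what you have written is a plan, not a proof: the entire content of the hardness argument lies in the gadget construction (concrete weights, delays, and priority lists) and in verifying both directions against that construction, and you explicitly defer all of this as ``parameter tuning.'' That deferred part is where the reduction either works or fails, so the proposal has a genuine gap.

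Two concrete difficulties that your architecture does not address. First, a scheduling game is a \emph{symmetric singleton} game: every job may choose \emph{any} machine. There are no ``dedicated machines'' for an element gadget and no way to restrict a selection job to ``two best-response machines'' by fiat; all such restrictions must be enforced indirectly by weights and priority lists (the paper does this with heavy jobs of weight $20$ that deter $\{a,b,c,d\}$ from the triplet-machines, and by placing jobs very late in the priority lists of machines they must avoid). Second, your element gadgets are required to be locally unstable whenever coverage differs from one --- both under-coverage and over-coverage --- which is a strictly harder design requirement than what is actually needed. The paper avoids it entirely: there are no per-triple selection jobs and no per-element gadgets. Instead, one job per element competes for a slot on a triplet-machine (with $|T|-n$ dummy jobs absorbing the unused triplet-machines), and a single pigeonhole argument shows that if no perfect matching exists, in \emph{any} putative NE some element job must occupy the fallback machine $M_4$; that displaces the trigger job $e$ (which is itself one of the five jobs of $G^*$, not a separate anchor) onto $M_2$ or $M_3$, reactivating the no-NE dynamics. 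If you want to complete your version, you would need to exhibit the gadgets explicitly and prove that no spurious equilibria arise in which, say, a selection job or an element-gadget job escapes its intended machines --- precisely the case analysis your sketch omits.
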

\begin{proof}
Given a game and a profile $\sigma$, verifying whether $\sigma$ is an NE can be done by checking for every job whether its current assignment is also its best-response, therefore the problem is in NP.

The hardness proof is by a reduction from 
$3$-bounded $3$-dimensional matching ($3$DM-$3$).
The input to the $3$DM-$3$ problem is a set of triplets $T \subseteq X \times Y \times Z$,
where $|T|\ge n$ and $|X|=|Y|=|Z|=n$. The number of occurrences of every element of $X
\cup Y \cup Z$ in $T$ is at most $3$. The goal is to decide whether $T$ has a $3$-dimensional matching of size $n$,  i.e., 
there exists a subset $T' \subseteq T$, such that $|T'|=n$, and every element in $X \cup Y \cup Z$ appears exactly once in $T'$.
$3$DM-$3$ is known to be NP-hard \cite{Kann91}.

Given an instance $T$ of $3$DM-$3$ matching and $\epsilon >0$, we construct the following scheduling game, $G_T$.
The set of jobs consists of:
\begin{enumerate}
    \item The $5$ jobs $\{a,b,c,d,e\}$ from the game $G^*$ in Example \ref{example:noNEsched}.
    \item A single dummy job, $f$, with processing time $2$.
    \item A set $D$ of $|T|-n$ dummy jobs with processing time $3$.
    \item A set $U$ of $|T|+1$ dummy jobs with processing time $20$.
    \item $3n$ jobs with processing time $1$ - one for each element in $X \cup Y \cup Z$.
\end{enumerate}
There are $m=|T|+4$ machines, $M_1,M_2,\ldots,M_{|T|+4}$. 
All the machines except for $M_2$ and $M_3$ have speed $s_j=1$. For $M_2$ and $M_3$, $s_2=s_3=1/2$. 

The heart of the reduction lies in determining the priority lists. The first three machines will mimic the no-NE game $G^*$ from Example \ref{example:noNEsched}. Note that if job $e$ is missing from $G^*$ then there exists an NE of $\{a,b,c,d\}$ on $M_1,M_2,M_3$. The idea is that if a $3$DM-$3$ matching exists, then job $e$ would prefer $M_4$ and leave the first three machines for $\{a,b,c,d\}$. However, if there is no $3$DM-$3$, then some job originated from the elements in $X \cup Y \cup Z$ will precede job $e$ on $M_4$, and $e$'s best-response would be on $M_2$ or $M_3$ - where it is guaranteed to have completion time $4$, and the no-NE game $G^*$ would come to life. The dummy jobs in $U$ are long enough to guarantee that each of the jobs $\{a,b,c,d\}$ prefers the first three machines over the last $|T|+1$ machines.

The priority lists are defined as follows. When the list includes a set, it means that the set elements appear in arbitrary order. For the first machine, $\pi_1=(a,b,c,d,e,f,U,X,Y,Z,D)$.
For the second and third machines, $\pi_2=\pi_3 =(e,d,b,c,a,f,U,X,Y,Z,D)$.
For the fourth machine, we have priority list $\pi_4 =(f,X,Y,Z,e,U,D,a,b,c,d)$.
The remaining $|T|$ machines are {\em triplet-machines}. 
For every triplet $t=(x_i,y_j,z_k) \in T$, the priority list of the triplet-machine corresponding to $t$ is $(D,x_i,y_j,z_k,U,f, X\setminus\{x_j\},  Y\setminus\{y_j\}, Z\setminus\{z_j\}, a,b,c,d,e)$.

Observe that in any NE, the dummy job $f$ with processing time $2$ is assigned as the first jobs on $M_4$. Also, the dummy jobs in $D$ have the highest priority on the triplet-machines, thus, in every NE, there are $|D|=|T|-n$ triplet-machines on which the first job is from $D$. Finally, it is easy to see that in every NE there is exactly one dummy job from $U$ on each of the last $|T|+1$ machines.

 Figure~\ref{fig:hardNE2} provides an example for $n=2$ and $|T|=3$.

\begin{figure*}[ht]
\begin{center}
\includegraphics[height=5.5cm]{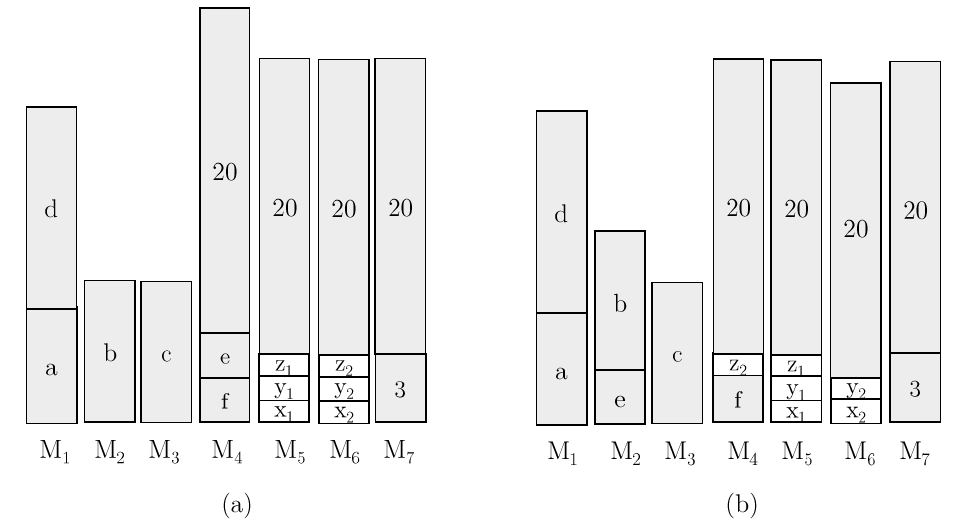}
\end{center}
\caption{(a) Let $T=\{(x_1,y_1,z_1),(x_2,y_2,z_2),(x_1,y_2,z_2)\}$.
A matching of size $2$ exists. Job $e$ is assigned on $M_4$, an NE exists. (b) Let $T=\{(x_1,y_1,z_1),(x_2,y_2,z_1),(x_1,y_2,z_2)\}$.
A matching of size $2$ does not exist. Job $e$ is not assigned on $M_4$ and the no-NE game $G^*$ is induced on the first three machines.}
\label{fig:hardNE2}
\end{figure*}

In order to complete the proof we prove the following two claims that relate the existence of a NE in the game $G_T$ to the existence of a perfect matching in the 3DM-3 instance $T$.   We first show that if the 3DM-3 instance has a perfect matching, then the game induced due to our construction has a pure Nash equilibrium.
\begin{claim}\label{clm:1:hard:NEexistsSchedule}
If a $3$D-matching of size $n$ exists in $T$, then the game $G_T$ has an NE. 
\end{claim}
\begin{proof}
Let $T' \subseteq T$ be a matching of size $n$. Assign the jobs of $X \cup Y \cup Z$ on the triplet-machines corresponding to $T'$ and the jobs of $D$ on the remaining triplet-machines. Assign $f$ and $e$ on $M_4$. Also, assign a single job from $U$ on all but the first $3$ machines. We are left with the jobs $a,b,c,d$ that are assigned on the first three machines: $a$ and $d$ on $M_1$, $b$ on $M_2$ and $c$ on $M_3$. It is easy to verify that the resulting assignment is an NE. The crucial observation is that all the jobs originated from $X \cup Y \cup Z$ completes at time at most $3$, and therefore have no incentive to select $M_4$. Thus, job $e$ completes at time $4$ on $M_4$ and therefore, has no incentive to join the no-NE game on the first three machines.
\end{proof}

The next claim shows that if the 3DM-3 instance does not have a perfect matching, then as a consequence of our construction, the no-NE subgame $G^*$ is triggered, and $G_T$ has no NE.
\begin{claim}\label{clm:2:hard:NEexistsSchedule}
If a $3$D-matching of size $n$ does not exist, then the game $G_T$ has no NE.
\end{claim}
\begin{proof}
Since a matching does not exist, at least one job from $X \cup Y \cup Z$, is not assigned on its triplet machine, and thus prefers $M_4$, where its completion time is $3$. Thus, job $e$ prefers to be first on $M_2$ or $M_3$, where its completion time is $4$. The long dummy jobs guarantee that machines $M_1,M_2$ and $M_3$ attracts exactly the $5$ jobs $\{a,b,c,d,e\}$ and the no-NE game $G^*$ is played on the first three machines. 
\end{proof}

The proof of Theorem~\ref{thm:hard:NEexistsSchedule} then immediately follows from claims~\ref{clm:1:hard:NEexistsSchedule} and \ref{clm:2:hard:NEexistsSchedule}.
\end{proof}


\vskip6pt
Our next results are positive. We introduce four classes of games for which an NE is guaranteed to exist. This characterization is tight in a sense that our inexistence example disobeys it in a minimal way. For classes $\G_3$ ($s_j=1$ for all $j\in M$) and $\G_4$ ($\pi_j=\pi$ for all $j\in M$), a simply greedy algorithm shows that an NE always exists. We refer to Correa and Queyranne~\cite{correa2012efficiency}, and Farzad et al.~\cite{DBLP:journals/cjtcs/FarzadOV08}, respectively, for a formal proof.

The following algorithm computes an NE for instances in the class $\G_1$, that is, $p_i=1$ for all $i \in N$. It assigns the jobs greedily, where in each step, a job is added on a machine on which the cost of a next job is minimized.

\begin{algorithm}[H]
\caption{ Calculating an NE of unit jobs on related machines} \label{NEUWjobs}
\begin{algorithmic}[1]
\STATE Let $\ell_j$ denote the number of jobs assigned on machine $j$. Initially, $\ell_j=0$ for all $1 \le j \le m$.
\REPEAT 
\STATE Let $j^{\star} = \argmin_j ~~ (\ell_j+1)/s_j$.
\STATE Assign on machine $j^{\star}$ the first unassigned job on its priority list.
\STATE $\ell_{j^{\star}} = \ell_{j^{\star}}+1$.
\UNTIL {all jobs are scheduled}
\end{algorithmic}
\end{algorithm}

\begin{theorem}
\label{thm:unitjobs}
If $p_i=1$ for all jobs $i \in N$, then Algorithm \ref{NEUWjobs} calculates an NE.
\end{theorem}
 \begin{proof}
 Let $\sigma^{\star}$ be the schedule produced by Algorithm \ref{NEUWjobs}. We show that $\sigma^{\star}$ is an NE.   Note that the jobs are assigned one after the other according to their completion time in $\sigma^{\star}$. That is, if $j_1$ is assigned before $j_2$ then $C_{j_1}(\sigma^{\star}) \le C_{j_2}(\sigma^{\star})$. 
 Assume by contradiction that $\sigma^{\star}$ is not an NE, and let $i$ be a job that can migrate from its current machine to machine $j$ and reduce its completion time. Assume that if it migrates, then $i$ would be assigned as the $k$-th job on machine $j$. This contradicts  the choice of the algorithm when the $k$-th job on machine $i$ is assigned - since $j$ should have been selected. If no job is $k$-th on machine $i$, then we get a contradiction to the assignment of $i$.
 \end{proof}

~

The following algorithm produces an NE for instances in the class $\G_2$, that is, $m=2$.

\begin{algorithm}[H]
\caption{ Calculating an NE schedule on two related machines} \label{NE2machines}
\begin{algorithmic}[1]
\STATE Assign all the jobs on $M_1$ (fast machine) according to their order in $\pi_1$.
\STATE For $1 \le k \le n$, let the job $i$ for which $\pi_2(i)=k$ perform a best-response move (migrate to $M_2$ if this reduces its completion time).
\end{algorithmic}
\end{algorithm}

\begin{theorem}
\label{thm:m2NE}
If $m=2$, then an NE exists and can be calculated efficiently.
\end{theorem}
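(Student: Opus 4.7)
The case $m = 1$ is immediate, since the unique profile assigns every job to the single machine and is trivially a NE. For the interesting case $m = 2$, assume without loss of generality that $c_1 \le c_2$. My plan is to exhibit a polynomial-time algorithm that constructs a pure Nash equilibrium and then argue its correctness via a combinatorial comparison of move-conditions.

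\textbf{Algorithm.} Initialize $s_i = 1$ for every $i \in N$. Repeatedly, while some job on $M_1$ has a profitable deviation to $M_2$ (i.e., $c_2 w^i_2((s_{-i},2)) < c_1 w^i_1(s)$), pick the deviator with the smallest $\pi_2$-value among all such jobs and move it to $M_2$. Return $s$ when no profitable deviation from $M_1$ remains.

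\textbf{Termination via monotonicity.} The key observation is that when a job moves from $M_1$ to $M_2$, every other job $k$ remaining on $M_1$ sees a weakly smaller $M_1$-cost (weakly fewer jobs above $k$ on $\pi_1$ on $M_1$) and a weakly larger hypothetical $M_2$-cost (the moving job may sit above $k$ on $\pi_2$). Hence $M_1$ becomes weakly more attractive for $k$ relative to $M_2$, and the set of $M_1$-jobs with a profitable deviation is monotonically non-increasing throughout the run. At each iteration the set loses at least the chosen mover, so the algorithm terminates within $n$ iterations, each running in polynomial time.

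\textbf{Correctness.} It remains to show no job on $M_2$ wants to move back to $M_1$ at the output. The tie-breaking rule ensures that jobs moving later have strictly larger $\pi_2$-value, so once a job $i$ moves at time $t_i$, subsequent moves do not affect its $M_2$-cost. For any such $i$ with move-time $t_i$, its hypothetical $M_1$-cost at termination equals $c_1 w^i_1(s^{t_i}) - c_1 W(J_i)$, where $J_i$ collects the jobs above $i$ on $\pi_1$ that moved after $t_i$. Assume for contradiction that $i$ prefers $M_1$ at the end. I plan to combine $i$'s beneficial-deviation inequality at $t_i$ with the beneficial-deviation inequalities of each $k \in J_i$: each such $k$ satisfies $\pi_1(k) < \pi_1(i)$ and $\pi_2(k) > \pi_2(i)$, and crucially $i$ itself is above $k$ on $\pi_2$ on $M_2$ at time $t_k + 1$, which inflates $c_2 w^k_2(s^{t_k+1})$ by at least $c_2 w_i$. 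Using $c_1 \le c_2$, this forces upper and lower bounds on $W(J_i)$ that contradict each other.

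\textbf{Main obstacle.} The main technical challenge lies in the combinatorial telescoping of move-conditions: each $k \in J_i$ contributes its own inequality involving $c_1, c_2$ and weights, and these must be combined carefully to pin down $W(J_i)$. Direct case checks for small $n$ (where $|J_i| \le 2$ and the contradiction is already visible from $i$'s and any single $k$'s move-conditions) indicate the algebra closes cleanly; the general case proceeds by induction on $|J_i|$, applying the same contradiction technique to the first-to-move element of $J_i$.
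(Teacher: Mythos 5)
Your algorithm is essentially the paper's: start all jobs on the faster machine and let them migrate to $M_2$ in increasing order of $\pi_2$-value. Your monotonicity argument — that each migration makes $M_1$ weakly more attractive for every job still there, so the set of would-be deviators only shrinks, giving termination in $n$ rounds and stability of the jobs that remain on $M_1$ — is correct and is exactly how the paper handles that half. The genuine gap is in the only hard part: showing that no job that moved to $M_2$ wants to return. You assemble all the right ingredients (the return cost $c_1\bigl(w^i_1(s^{t_i})-W(J_i)\bigr)$, the facts $\pi_1(k)<\pi_1(i)$ and $\pi_2(k)>\pi_2(i)$ for every $k\in J_i$, and that $i$ sits ahead of each such $k$ on $M_2$), but the decisive combination of inequalities is left as a plan, and the route you propose — induction on $|J_i|$ anchored at the \emph{first}-to-move element of $J_i$ — does not obviously close. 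For the first mover $k_1$ the two available inequalities only give $c_2w_{k_1}+c_1w_i<c_1W(J_i)$, which is a contradiction precisely when $W(J_i)=w_{k_1}$, i.e.\ when $|J_i|=1$; for larger $J_i$ it is unclear what smaller instance the induction would recurse to.

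The step that actually finishes the proof — and is what the paper does — is to take the \emph{last} element $k^*$ of $J_i$ to move (equivalently, the one with the largest $\pi_2$-value in $J_i$). Write $L_1=w^i_1(s^{t_i})$ and $L_2$ for the weight delaying $i$ on $M_2$, which is frozen from time $t_i+1$ on. At $k^*$'s move time, all of $J_i$ and everything counted in $L_2$ already precede it on $M_2$, so its new cost is at least $c_2\bigl(L_2+W(J_i)\bigr)$; its cost on $M_1$ just before moving is at most $c_1\bigl(L_1-w_i\bigr)$, since its delay set on $M_1$ is contained in $B_i\setminus\{i\}$ at time $t_i$ and only shrinks afterwards. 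Its move being beneficial gives $c_2\bigl(L_2+W(J_i)\bigr)<c_1\bigl(L_1-w_i\bigr)$, and adding the assumed return-beneficial inequality $c_1\bigl(L_1-W(J_i)\bigr)<c_2L_2$ yields $(c_2-c_1)\,W(J_i)<-c_1w_i<0$, contradicting $c_2\ge c_1$ and $W(J_i)\ge 0$. No induction is needed. So your approach is the right one and every piece is on the table, but the crux inequality chain is missing and the completion you sketch would not go through as stated.
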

\begin{proof}
Assume w.l.o.g. that $s_1=1$ and $s_2=s \le 1$. Consider Algorithm~\ref{NE2machines}, which initially assigns all the jobs on the fast machine. Then, the jobs are considered according to their order in $\pi_2$, and every job gets an opportunity to migrate to $M_2$.

Let us denote by $\widehat{\sigma}$ the schedule after the first step of the algorithm (where all the jobs are on $M_1$), and let $\sigma$ denote the schedule after the algorithm terminates. The following two claims show that after the termination of the algorithm, no job has a unilateral deviation that improves its cost, i.e., $\sigma$ is an NE.

\begin{claim}
\label{cl:dontleaveM1}
No job for which $\sigma_i=M_1$ has a beneficial migration.
\end{claim}
\begin{proof}
Assume by contradiction that job $i$ is assigned on $M_1$ and has a beneficial migration. Assume that $\pi_2(i)=k$. Job $i$ was offered to perform a migration in the $k$-th iteration of step 2 of the algorithm, but chose to remain on $M_1$. The only migrations that took place after the $k$-th iteration are from $M_1$ to $M_2$. Thus, if migrating is beneficial for $i$ after the algorithm completes, it should have been beneficial also during the algorithm, contradicting its choice to remain on $M_1$.
\end{proof}

\begin{claim}
\label{cl:dontleaveM2}
No job for which $\sigma_i=M_2$ has a beneficial migration.
\end{claim}
\begin{proof}
Assume by contradiction that the claim is false and let $i$ be the first job on $M_2$ (first with respect to $\pi_2$) that may benefit from returning to $M_1$.
Recall that, $\widehat{\sigma}$ denotes the schedule before job $i$ migrates to $M_2$ - during the second step of the algorithm.
Recall that $C_i(\sigma)$ is the completion time of job $i$ on $M_2$, and $C_i(\widehat{\sigma})$ is its completion time on $M_1$ before its migration.

Since the jobs are activated according to $\pi_2$ in the $2$-nd step of the algorithm, no jobs are added before job $i$ on $M_2$. Job $i$ may be interested in returning to $M_1$ only if some jobs that were processed before it on $M_1$, move to $M_2$ after its migration. Denote by $\Delta$ the set of these jobs, and let $\delta$ be their total processing time. Let $i'$ be the last job from $\Delta$ to complete its processing in $\sigma$. Since job $i'$ performs its migration out of $M_1$ after job $i$, and jobs do not join $M_1$ during step 2 of the algorithm, the completion time of $i'$ when it performs the migration is at most $C_{i'}(\widehat{\sigma})$. The migration from $M_1$ to $M_2$ is beneficial for $i'$, thus, $C_{i'}(\sigma) < C_{i'}(\widehat{\sigma})$. 

The jobs in $\Delta$ are all before job $i$ in $\pi_1$ and after job $i$ in $\pi_2$. 
Therefore, $C_{i'}(\widehat{\sigma}) < C_i(\widehat{\sigma})$, and $C_{i'}(\sigma) \ge C_{i}(\sigma)+\delta/s$.
Finally, we assume that $\sigma$ is not stable and $i$ would like to return to $M_1$. By returning, its completion time would be $C_i(\widehat{\sigma})-\delta$. Given that the migration is beneficial for $i$, and that $i$ is the first job who likes to return to $M_2$, we have that $C_i(\widehat{\sigma})-\delta < C_i(\sigma)$. 

Combining the above inequalities, we get 
\begin{align*}
 C_i(\widehat{\sigma}) &< C_i(\sigma)+\delta \le C_{i'}(\sigma)-(1/s-1)\delta \\
 &< C_{i'}(\widehat{\sigma})-(1/s-1)\delta < C_i(\widehat{\sigma})-(1/s-1)\delta,
\end{align*}
which contradicts the fact that $s \le 1$ and $\delta \ge 0$.
\end{proof}

By combining the claims~\ref{cl:dontleaveM1} and \ref{cl:dontleaveM2}, 
we conclude that no player has a beneficial deviation and $\sigma$ is an NE.
\end{proof}

The last class for which we show that an NE is guaranteed to exist is the class of games with at most $3$ jobs. Consider an instance consisting of $m$ machines with arbitrary priority lists, and $3$ jobs $a, b,\text{and}~c$. Let $M_1$ be a machine with the highest speed. Assume $\pi_1=(a,b,c)$. Clearly, job $a$ is on $M_1$ in every NE. An NE can be computed by adding jobs $b$ and $c$ greedily one after the other. If job $b$ picks $M_1$ then the resulting schedule is an NE. If job $b$ picks $M_2$ and job $c$ is then added before it on $M_2$, then job $b$ may migrate, to get a final NE. The above characterization is tight, as there exists a game with only $4$ jobs that has no NE: 

\begin{example}
\label{example:noNEsched4}
{\em
Consider the game $\hat{G}$ with $4$ jobs,  $N = \{a,b,c,d\}$, and three machines, $M = \{M_1, M_2, M_3\}$, with $\pi_1=(a,b,c,d)$, and $\pi_2= \pi_3= (d,b,c,a)$.  The speed of machine $j$ is $s_j=1/j$. 
The job processing times are $p_a = 5, p_b = 4, p_c =\frac{13}{3} \approx 4.33$ and $p_d = 9.25$.

Job $a$ is clearly on $M_1$ in every NE. Since $s_2>s_3$, Job $d$ is not on $M_3$ in any NE. We distinguish two different cases for job $d$.
\begin{enumerate}
    \item Job $d$ is on $M_1$. Then as job $b$ has the highest remaining priority among $b$ and $c$ on all machines, job $b$ picks the machine with the lowest completion time, which is $M_2$, and job $c$ lastly is then on $M_1$. As a result, $d$ prefers $M_2$ over $M_1$ (since $18.5 < 18.58$).
    \item Job $d$ is on $M_2$. Then as job $b$ has the highest remaining priority among $b$ and $c$ on all machines, job $b$ picks the machine with the lowest completion time, which is $M_1$, and job $c$ lastly is then on $M_3$ (since $13 < 13.33$). As a result, $d$ prefers $M_1$ (since $18.25< 18.5$) over $M_2$.
\end{enumerate}
Thus, the game $\hat{G}$ has no pure Nash equilibrium.}
\end{example}

A possible generalization of our setting considers unrelated machines, that is, for every job $i$ and machine $j$, $p_{ij}$ is the processing time of job $i$ if processed on machine $j$. We conclude this section with an example demonstrating that an NE need not exist in this environment already with only two unrelated machines.

\begin{example}
{\em 
Consider a game $G$ with 3 jobs, $N=\{a,b,c\}$, and two machines, $M=\{M_1,M_2\}$ with $\pi_1=(a,b,c)$ and $\pi_2=(c,a,b)$. The job processing times are $p_{a1}=5$, $p_{a2}=4$, $p_{b1}=7$, $p_{b2}=4$, $p_{c1}=1$ and $p_{c2}=7$. We show that $G$ has no NE. Specifically, we show that no assignment of job $c$ can be extended to a stable profile.

First, assume that job $c$ is on $M_1$. Then job $a$ has the highest remaining priority on the two machines and picks $M_2$. Given that job $a$ is on $M_2$, job $b$ prefers $M_1$ over $M_2$. However, job $c$ now prefers to pick $M_2$ as its completion time there is $7$, which is smaller than $8$.

Second, assume that job $c$ is on $M_2$. Then job $a$ has the highest remaining priority on the two machines and picks $M_1$. Given that job $a$ is on $M_1$, job $b$ prefers $M_2$. However, job $c$ now prefers to pick $M_1$ as its completion time there is $6$, which is smaller than $7$.
}
\end{example}
\subsection{Convergence of Best-Response Dynamics}
In this section we consider the question whether natural dynamics such as better-responses are guaranteed to converge to an NE. 
Given a strategy profile $\sigma$, a strategy $\sigma'_i$ for job $i\in N$ is a better-response if $C_i(\sigma'_i,\sigma_{-i})<C_i(\sigma)$.

We show that every sequence of best-response converge to an NE for every instance $G \in \G_1 \cup \G_2 \cup \G_3 \cup \G_4$. 

\begin{theorem} 
Let $G$ be a game instance in $\G_1 \cup \G_2 \cup \G_3 \cup \G_4$. Any best-response sequence  
in $G$ converges to an NE.
\end{theorem}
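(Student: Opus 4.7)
The plan is to handle each of the four classes separately and, in each case, exhibit a (usually lexicographic) potential that strictly decreases along every best-response step. Since the set of profiles is finite, such a potential forces any best-response sequence to terminate, and its limit is necessarily a NE.

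The cleanest case is $\G_4$. Enumerate the jobs in priority order as $i_1,\ldots,i_n$ with $\pi(i_k)=k$. The key observation is that $B_{i_k}(s)$ contains only jobs of $\pi$-priority at most $k$, so $\cost_{i_k}(s)$ depends only on $(s_{i_1},\ldots,s_{i_k})$. Define
\[
\Phi(s) \;=\; \bigl(\cost_{i_1}(s),\cost_{i_2}(s),\ldots,\cost_{i_n}(s)\bigr)\in\mathbb{R}^n,
\]
equipped with the lexicographic order. If job $i_j$ performs a best response in $s$, the $j$-th coordinate strictly decreases while the first $j-1$ coordinates are unchanged (they depend only on $s_{i_1},\ldots,s_{i_{j-1}}$). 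Hence $\Phi$ strictly lex-decreases, and since it takes only finitely many values, every best-response sequence in $\G_4$ must be finite.

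For $\G_2$ (two machines, $c_1=1$, $c_2=c\ge 1$), every best-response is a migration between $M_1$ and $M_2$. I would adapt the analysis of Algorithm~\ref{NE2machines}: the proofs of Claims~\ref{cl:dontleaveM1} and~\ref{cl:dontleaveM2} show that once a job is in best response, later moves of lower-$\pi_2$-priority jobs cannot re-enable a profitable deviation for it, which suggests a lex potential on the jobs' costs ordered by $\pi_2$ that strictly decreases along any best-response step.

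For $\G_1$ (unit weights) and $\G_3$ (identical delays) there is no global priority, so the $\G_4$ argument does not apply directly. In $\G_1$, the cost on machine $j$ at rank $k$ is $k\,c_j$ and, since all weights equal one, the multiset of realized costs on $j$ is determined by $n_j(s)$ alone, so a lex potential on the sorted vector of all realized costs is a natural candidate. For $\G_3$, one-shot candidates such as total weighted completion time or the sum of squared loads can be ruled out as exact potentials on small examples, so I would follow the scheme underlying the existence proof in Correa and Queyranne~\cite{correa2012efficiency} and build a lex potential from the weights together with each job's position in the local priority list on its chosen machine. The main obstacle lies exactly here: with identical machines carrying arbitrary, machine-dependent priority lists, there is no canonical global ordering of the jobs to drive a lex argument, and verifying monotonicity of a candidate potential requires a case analysis on the position of the deviating job in both its old and its new machine's priority list - precisely the place where the clean $\G_4$ argument breaks.
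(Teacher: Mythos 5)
Your argument is complete only for $\G_4$: the lexicographic potential on the cost vector ordered by the global priority list is correct, since a job's cost there really does depend only on the strategies of the jobs preceding it in $\pi$, and this gives a clean termination proof. The other three classes are not proved. For $\G_3$ you explicitly leave the case open. For $\G_2$ you only gesture at a lex potential ordered by $\pi_2$ without verifying monotonicity; note that jobs on $M_1$ are ordered by $\pi_1$, so the cost of a $\pi_2$-early job is \emph{not} determined by the strategies of the jobs preceding it in $\pi_2$, and the $\G_4$-style coordinate argument does not transfer. For $\G_1$ your ``natural candidate'' actually fails: take $c_1=1$, $c_2=2$, three unit jobs on $M_1$ and one on $M_2$, and let the job of rank $3$ on $M_1$ have top priority in $\pi_2$. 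Its (unique, hence best) improving move takes its cost from $3$ to $2$, but it displaces the job already on $M_2$ to cost $4$, so the sorted multiset of realized costs changes from $(1,2,2,3)$ to $(1,2,2,4)$ --- an increase in either lexicographic order. The displacement effect of priority lists is precisely what breaks the standard sorted-cost-vector potential for singleton games, so in $\G_1$ and $\G_3$ this candidate cannot be repaired by ``verifying monotonicity''; one would need something like the rank-refined multiset potential the paper uses for matroid games with lazy better responses.

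The idea you are missing is that the paper does not exhibit a potential at all. It argues by contradiction on a hypothetical best-response cycle: letting $\Gamma$ be the set of jobs that move during the cycle, it identifies, separately for each class, a job of $\Gamma$ that after some move in the cycle can never again improve, hence cannot return to its initial strategy --- contradicting the existence of the cycle. Concretely: for $\G_1$, the $\pi_1$-highest job of $\Gamma$ among those attaining the minimum cost observed anywhere in the cycle (no $\Gamma$-job can later be inserted ahead of it without beating that minimum); for $\G_3$, the analogous job for the minimum \emph{start time}; for $\G_2$, the $\pi_2$-highest job of $\Gamma$, whose return to $M_1$ is excluded by an inequality chain in the spirit of Claim~\ref{cl:dontleaveM2}; for $\G_4$, the globally highest-priority job of $\Gamma$. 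This ``one job freezes'' scheme is strictly weaker than a potential (a single best response can increase any reasonable aggregate, as the example above shows) but it suffices for convergence. Your $\G_4$ potential can stand as an alternative proof of that one case; for the other three you need either the cycle argument or a substantially more refined potential.
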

\begin{proof}
The proof has the same structure for all four classes. Assume that best-response dynamics (BRD) does not converge. Since the number of different profiles is finite, this implies that the sequence of profiles contains a loop. That is, the sequence includes a profile $\sigma_0$, starting from which jobs migrate and eventually return to their strategy in $\sigma_0$. Let $\Gamma$ denote the set of jobs that perform a migration during this loop. For each of the four classes we identify a job $i \in \Gamma$ such that once job $i$ migrates, it cannot have an additional beneficial move.

Consider first the case $G \in \G_1$, that is, a game with unit jobs.
Let $C_{min}$ be the lowest cost of a job in $\Gamma$ during the BR-cycle. Let $M_1$ be a machine on which $C_{min}$ is achieved. Let $i$ be the job achieving cost $C_{min}$ on $M_1$ with the highest priority on $M_1$ among the jobs in $\Gamma$. Once $i$ achieves cost $C_{min}$, its cost does not increase, as no job is added to $M_1$ before it. Job $i$ cannot have an additional beneficial move, as this will contradict the definition of $C_{min}$.


We turn to consider games in $\G_2$, that is, $G$ is played on two machines. W.l.o.g., assume $s_1=1$ and $s_2=s \le 1$.  Let $i$ be the job in $\Gamma$ with highest priority in $\pi_2$. 
Given that BRD loops and that $i \in \Gamma$, it holds that during the BR sequence $i$ migrates from $M_1$ to $M_2$ and then back from $M_2$ to $M_1$. 

We show that once $i$ moves from $M_1$ to $M_2$, moving back to $M_1$ cannot be beneficial for it.
Let $\sigma^\prime$ denote the schedule before job $i$ migrates from $M_1$ to $M_2$.
Assume by contradiction that $i$ may benefit from returning to $M_1$. 
Let $L_1$ be the total processing time of jobs on $M_1$ that precede $i$ on $\pi_1$ in $\sigma^{\prime}$. We have that $C_i(\sigma^{\prime})=L_1+p_i$.
Let $L_2$ be the the total processing time of jobs in $N \setminus \Gamma$ that precede $i$ on $\pi_2$.
Since $i$ has the highest priority among $\Gamma$ on $M_2$, its cost while on $M_2$ is $(L_2+p_i)/s$, independent of other jobs leaving and joining $M_2$.
The migration of $i$ from $M_1$ to $M_2$ is beneficial, thus, $L_1+p_i > (L_2+p_i)/s$.
Migrating back to $M_1$ may become beneficial only if the total processing time of job that would precede it on $M_1$ is less than $L_1$, thus, at least one job that precedes $i$ on $\pi_1$ migrates out of $M_1$ when $i$ is on $M_2$. Let $k$ be the last job, for which $\pi_1(k) < \pi_1(i)$ that have left $M_1$ when $i$ is on $M_2$. Following $k$'s migration the processing time of jobs on $M_1$ that precede $i$ in $\pi_1$ is $L'_1$. Migrating back is beneficial for $i$, thus, $L'_i+p_i <(L_2+p_i)/s$ (additional jobs may join $M_1$ after $k$ leaves it, but this only makes $M_1$ less attractive for $i$). 
Since $\pi_2(k) > \pi_2(i)$, the cost of $k$ after its migrating to $M_2$ is at least $(L_2+p_i+p_k)/s$. $k$'s migration from $M_1$ to $M_2$ is beneficial, thus, $L'_1+p_k > (L_2+p_i+p_k)/s$. 
By combining the above inequalities we reach a contradiction. Specifically, 
$L'_i+p_i < (L_2+p_i)/s =(L_2+p_i+p_k)/s-p_k/s < L'_i+p_k -p_k/s \le L'_i$.
We conclude that job $i$ cannot benefit from returning to $M_1$ and thus, cannot be involved in the BRD-cycle.


Assume next that $G \in \G_3$, that is, machines have identical speeds. Let $t$ be the lowest start time of a job in $\Gamma$ during the BR-cycle. Let $M_1$ be a machine on which $t$ is achieved. Let $i$ be the job in $\Gamma$ with highest priority on $M_1$. Clearly, once $i$ achieves start time, $t$, it cannot have an additional beneficial move, as this will contradict its choice.


Finally, if $G \in \G_4$, that is, when machines share a global priority list, then once the job in $\Gamma$ with the highest priority migrates, it selects the machine with the lowest total processing time of jobs in $N \setminus \Gamma$ that precedes it, and cannot have an additional beneficial move later.

\end{proof}

\section{Equilibrium Inefficiency}

Two common measures for evaluating the quality of a schedule are the makespan, 
given by $C_{max}(\sigma) = \max_{i \in N} C_i(\sigma)$, and the sum of completion times, given by $\sum_{i \in N} C_i(\sigma)$. In this section we analyze the equilibrium inefficiency with respect to each of the two objectives, for each of the four classes for which an NE is guaranteed to exist.

We begin with $\G_1$, the class of instances with unit jobs. For this class we show that allowing arbitrary priority lists does not hurt the social cost, even on machines with different speeds. 

\begin{theorem}
PoA$(\G_1) = $ PoS$(\G_1)= 1$ for both the min-makespan and the sum of completion times objective.
\end{theorem}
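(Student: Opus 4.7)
The plan is to prove the stronger fact that \emph{every} Nash equilibrium is socially optimal for both objectives; combined with the NE existence guarantee for $\G_1$ stated in the preceding section, this immediately yields $\PoA(\G_1)=\mbox{PoS}(\G_1)=1$.

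The first step exploits unit weights to collapse the problem to a load vector. For any profile $s$, let $k_j$ denote the number of jobs assigned to machine $j$. Regardless of the priority list $\pi_j$, the completion times on $j$ are exactly $c_j,2c_j,\ldots,k_j c_j$, so both objectives depend only on $(k_1,\ldots,k_m)$:
\[ C_{max}(s)=\max_{j} c_j k_j, \qquad \sum_{i\in N}\cost_i(s)=\sum_{j} c_j\,\frac{k_j(k_j+1)}{2}. \]
I then extract the crucial NE condition by applying the best-response inequality to the \emph{lowest-priority} job on each used machine $j$: this job has current cost $c_j k_j$, and any deviation to a machine $j'$ places it in position at most $k_{j'}+1$, so its deviation cost is at most $c_{j'}(k_{j'}+1)$. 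Hence
\[ c_j k_j \;\le\; c_{j'}(k_{j'}+1) \qquad \text{for all } j,j' \text{ with } k_j>0. \qquad (\ast) \]

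Finally, I show that $(\ast)$ forces optimality for both measures. For the makespan, if $c_{j^\star}k_{j^\star}=\max_j c_j k_j$ and some feasible load vector $k'$ satisfied $\max_j c_j k'_j < c_{j^\star} k_{j^\star}$, then $k'_{j^\star}<k_{j^\star}$, so $\sum_j k'_j=n$ would force $k'_{j'}\ge k_{j'}+1$ for some $j'$, and by $(\ast)$ we would get $c_{j'}k'_{j'}\ge c_{j'}(k_{j'}+1)\ge c_{j^\star}k_{j^\star}$, a contradiction. For the sum of completion times, view the $n$ realized completion times as a subset of the grid $\{t c_j : j\in M,\ t\ge 1\}$; condition $(\ast)$ says every used value $c_j k_j$ is at most every first-unused value $c_{j'}(k_{j'}+1)$, and hence at most every unused grid value, so $s$ uses exactly the $n$ globally smallest grid points (which automatically form a prefix on each machine) and therefore minimizes the sum over all feasible load vectors. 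No real obstacle arises; the only point worth care is that $(\ast)$ must be derived from the lowest-priority job on each used machine, since other jobs' NE inequalities are weaker, yet this single family of conditions already pins down optimality for both objectives simultaneously.
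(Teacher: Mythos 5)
Your proof is correct, and at its core it uses the same idea as the paper: for unit weights both objectives depend only on the load vector $(k_1,\ldots,k_m)$, and the decisive information is the stability of the job that finishes last on each used machine. The organization differs, though. The paper first constructs a canonical greedy NE $s^{\star}$ (via the matroid/greedy existence theorem), argues that every NE has the same sorted cost vector as $s^{\star}$, and only then proves $s^{\star}$ optimal by an exchange argument; for the sum of completion times it merely gestures at ``similar arguments.'' You instead distill the single inequality $(\ast)$: $c_j k_j \le c_{j'}(k_{j'}+1)$ for every used $j$ and every $j'$, and show directly that $(\ast)$ forces optimality of \emph{every} NE, with no detour through a reference equilibrium. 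Your makespan argument is essentially the paper's exchange argument, but your treatment of the sum objective --- observing that $(\ast)$ means every used grid value $tc_j$ is at most every unused one, so the NE occupies the $n$ smallest points of the multiset $\{tc_j : j\in M,\ t\ge 1\}$, which is exactly what any sum-minimizer must do --- is more explicit and complete than what the paper writes. The only caveats, which you handle correctly, are that $(\ast)$ must come from the lowest-priority job on each machine (the deviation position on $j'$ is only bounded by $k_{j'}+1$, which suffices), and that NE existence for $\G_1$ is needed separately so that PoA and PoS are well defined.
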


\begin{proof}
Let $\sigma$ be a schedule of unit jobs. The quality of $\sigma$ is characterized by the vector $(n_1(\sigma), n_2(\sigma),\ldots,n_m(\sigma))$ specifying the number of jobs on each machine. The makespan of $\sigma$ is given by $\max_j n_j(\sigma)/s_j$, and the sum of completion times in $\sigma$ is $\sum_j \frac{n_j(\sigma)(n_j(\sigma)+1)}{2s_j}$. 

Theorem \ref{thm:unitjobs} shows that assigning the jobs greedily, where on each step a job is added on a machine on which the cost of the next job is minimized, yields an NE. Let $\sigma^{\star}$ denote the resulting schedule, and let $C_1(\sigma^{\star}) \le C_2(\sigma^{\star}) \le \ldots \le C_n(\sigma^{\star})$ be the sorted vector of jobs' completion times in $\sigma^{\star}$. The proof proceeds by showing that this vector corresponds to schedules that minimize the makespan, as well as the sum of completion times. Also, we show that every NE schedule induces the same cost vector as $\sigma^{\star}$.

First, we show that $\sigma^{\star}$ achieves the minimum makespan. Assume that there exists a schedule $\sigma'$ such that $\max_j n_j(\sigma')/s_j < \max_j  n_j(\sigma^{\star})/s_j$. Let $M_1=\mbox{argmax}_j n_j(\sigma^{\star})/s_j$. It must be that $n_{M_1}(\sigma') < n_{M_1}(\sigma^{\star})$. Since $\sum_j n_j(\sigma')=\sum_j n_j(\sigma^{\star})=n$, there must be a machine $M_2$ such that $n_{M_2}(\sigma^{\star}) < n_{M_2}(\sigma')$. 
Thus, the last job on machine $M_1$ in $\sigma^{\star}$ can benefit from migrating to machine $M_2$, as its cost will be at most $(n_{M_2}(\sigma^{\star})+1)/s_{M_2} \le n_{M_2}(\sigma')/s_{M_2} \le \max_j n_j(\sigma')/s_j < \max_j n_j(\sigma^{\star})/s_j$. This contradicts the assumption that $\sigma^{\star}$ is an NE.

Second, we analyze the sum of completion times objective. For a schedule $\sigma$, the sum of completion times is $\sum_j  (1 + \ldots +n_j)/s_j$. Using similar arguments, if $\sigma^{\star}$ is not optimal with respect to the sum of completion times, there exists a beneficial migration from a machine whose contribution to the sum is maximal, to a machine with a lower contribution. 

Now, let $\sigma$ be an NE schedule with sorted cost vector and let $C_1(\sigma) \le C_2(\sigma) \le \ldots \le C_n(\sigma)$, and assume by contradiction that it has a different cost vector than $\sigma^*$. Let $i$ be the minimal index such that $C_i(\sigma^{\star}) \neq C_i(\sigma)$. Since $\sigma$ and $\sigma^{\star}$ agree on the costs of the first $i-1$ jobs, and since $\sigma^{\star}$ assigns the $i$-th job on a minimal-cost machine, it holds that $C_i(\sigma^{\star}) < C_i(\sigma)$. We get a contradiction to the stability of $\sigma$ - since some job can reduce its cost to $C_i(\sigma^{\star})$. The first and the second step concludes the proof of theorem. 
\end{proof}





In Theorem \ref{thm:m2NE} it is shown that an NE exists for any instance on two related machines. We now analyze the equilibrium inefficiency of this class. 
Let $\G_2^s$ denote the class of games played on two machines
with speeds $s_1=1$ and $s_2=s \le 1$.

\begin{theorem}
For the min-makespan objective, PoA$(\G_2^s) = $PoS$(\G_2^s) = s+1$ if $s \le \frac{\sqrt{5}-1}{2}$, and PoA$(\G_2^s) = $PoS$(\G_2^s) = \frac{s+2}{s+1}$ if $s > \frac{\sqrt{5}-1}{2}$.
\end{theorem}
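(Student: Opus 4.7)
The plan is to upper bound the price of anarchy by each of the two expressions and then take their minimum, which equals $1 + c/(c+1)$ on $c \le \phi$ and $1 + 1/c$ on $c \ge \phi$, where $\phi = (\sqrt{5}+1)/2$. Matching lower bounds, which yield the stated values also for the price of stability, will be obtained from two explicit small instances. Fix an NE profile $s$ and an optimal makespan $T^{*}$. Let $j^{*}$ be a job attaining $C_{max}(s)$; observe it must be last in priority on its machine, as otherwise a later job on that machine would have strictly larger cost. Write $W_{k}(s)$ for the total weight assigned to $M_k$ in $s$ and $W = W_{1}(s) + W_{2}(s)$. I will use two bounds on the optimum throughout: $T^{*} \ge w_{j^{*}}$ (as $j^{*}$ runs somewhere with delay at least $1$) and $W \le T^{*}(c+1)/c$, which follows because any feasible split satisfies $W_{1}^{*} \le T^{*}$ and $cW_{2}^{*} \le T^{*}$.

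\emph{Case 1 ($j^{*} \in M_{1}$).} Here $\cost_{j^{*}}(s) = W_{1}(s)$. The NE inequality against deviating to $M_{2}$ yields $W_{1}(s) \le c(W_{2}(s) + w_{j^{*}}) = c(W - W_{1}(s) + w_{j^{*}})$, hence $W_{1}(s) \le c(W + w_{j^{*}})/(c+1)$. Substituting the two OPT bounds gives $C_{max}(s) \le (2c+1)/(c+1)\cdot T^{*} = (1 + c/(c+1))T^{*}$. Separately, the chain $W_{1}(s) \le W \le (1 + 1/c)T^{*}$ is trivial. \emph{Case 2 ($j^{*} \in M_{2}$).} Here $\cost_{j^{*}}(s) = cW_{2}(s)$, and the NE inequality against deviating to $M_{1}$ gives $cW_{2}(s) \le W_{1}(s) + w_{j^{*}}$, i.e., $(c+1)W_{2}(s) \le W + w_{j^{*}}$. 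Substituting the OPT bounds as in Case 1 yields the $(1 + c/(c+1))T^{*}$ bound. The key refinement is $w_{j^{*}} \le W_{2}(s)$, since $j^{*}$ is itself one of the jobs counted in $W_{2}(s)$; together with the NE inequality this gives $(c+1)W_{2}(s) \le W + W_{2}(s)$, hence $cW_{2}(s) \le W \le (1 + 1/c)T^{*}$. Combining all four estimates, $C_{max}(s)/T^{*} \le \min\{1 + 1/c, 1 + c/(c+1)\}$, which yields the desired upper bound on $\mathrm{PoA}$.

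For the matching lower bound when $c \ge \phi$, take two jobs with $w_{1} = 1$ and $w_{2} = c$. Placing both on $M_{1}$ is an NE, since job $1$ strictly prefers $M_{1}$ and job $2$ prefers $M_{1}$ exactly when its cost $1+c$ there is at most its alternative cost $c^{2}$ on $M_{2}$, i.e., when $c^{2} \ge c+1$, which is precisely $c \ge \phi$. The NE makespan is $1+c$, while OPT places job $1$ on $M_{2}$ and job $2$ on $M_{1}$ for makespan $c$, giving ratio $(1+c)/c = 1 + 1/c$. For $c \le \phi$, take three jobs of weights $1$, $c/(c+1)$, and $(c+1-c^{2})/[c(c+1)]$ (the last is nonnegative exactly in this regime), with priorities chosen so that in the candidate NE the first two jobs lie on $M_{1}$ (the weight-$1$ job last in priority there) and the third lies on $M_{2}$. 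A direct computation shows this is an NE with the weight-$1$ job exactly indifferent between its current machine and $M_{2}$ (reflecting tightness of the Case 1 argument) and with makespan $(2c+1)/(c+1)$, while OPT assigns only the weight-$1$ job to $M_{1}$ and the two remaining jobs to $M_{2}$, whose weights sum to $1/c$ and give makespan $1$. Since the upper bound applies to every NE of the game, in both regimes the same value is achieved by $\mathrm{PoA}$ and $\mathrm{PoS}$.

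The main obstacle in the upper-bound argument is Case 2: without the observation $w_{j^{*}} \le W_{2}(s)$, the NE inequality alone would yield only the weaker $(2c+1)/(c+1)$ bound in that case, which is strictly larger than $1 + 1/c$ exactly in the regime $c > \phi$ where the latter is meant to be tight. The remaining work is routine calculation, together with carefully verifying the best-response conditions in the two tight instances; in the three-job instance in particular, the tightness of the bound forces the weight-$1$ job to be indifferent between its NE machine and the deviation, so the priorities on $M_{2}$ must be chosen so that it is inserted directly behind the job of weight $(c+1-c^{2})/[c(c+1)]$.
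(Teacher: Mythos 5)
Your upper-bound argument is essentially the paper's. You derive the same two Nash inequalities for a makespan-attaining job $j^{*}$ (comparing its cost against a deviation to the other machine), combine them with the same two lower bounds on the optimum, namely $T^{*}\ge w_{j^{*}}$ and $T^{*}\ge cW/(c+1)$, and obtain the two candidate bounds $1+\frac{c}{c+1}$ and $1+\frac{1}{c}$, whose minimum switches at $c=\frac{\sqrt{5}+1}{2}$. The paper organizes this as two regimes rather than a case distinction on the machine of $j^{*}$, but the content is identical; your refinement $w_{j^{*}}\le W_{2}(s)$ in Case~2 is exactly what the paper uses when it observes that every job can complete by time $W$ on the fast machine. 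Your two-job lower bound for $c\ge\frac{\sqrt{5}+1}{2}$ is the paper's instance verbatim. For $c\le\frac{\sqrt{5}+1}{2}$ you use a different three-job instance (weights $1$, $c/(c+1)$, $(c+1-c^{2})/(c(c+1))$ with machine-dependent priority lists) than the paper's (weights $1$, $(1+c-c^{2})/c^{2}$, $(1+c)/c$ with a common list); yours is valid and does achieve the ratio $1+\frac{c}{c+1}$ with optimum $1$, provided you also pin down the position of the third job in $\pi_{1}$ (it must not be first there, or it would profitably leave $M_{2}$).

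The one genuine gap is the price-of-stability claim. Your closing justification --- that the $\mathrm{PoS}$ lower bound follows ``since the upper bound applies to every NE of the game'' --- is a non sequitur: the upper bound holding for every NE gives only $\mathrm{PoA}\le$ bound, and exhibiting one NE attaining the bound gives only $\mathrm{PoA}\ge$ bound. To conclude $\mathrm{PoS}\ge$ bound you must show that the \emph{best} NE of each instance attains it, i.e., that every NE of the instance has makespan at least the claimed value. This does in fact hold, but it requires an argument: in the two-job instance, job $1$ is on $M_{1}$ in every NE and job $2$ strictly prefers $M_{1}$ when $c>\frac{\sqrt{5}+1}{2}$, so the NE is unique; in your three-job instance (with $\pi_{1}=(B,A,C)$ and $\pi_{2}=(C,A,B)$ in the obvious labeling) there are exactly two NE --- the one you describe and the one in which the weight-$1$ job sits on $M_{2}$ behind the job of weight $(c+1-c^{2})/(c(c+1))$ --- and both have makespan $(2c+1)/(c+1)$. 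The paper sidesteps this by explicitly arguing uniqueness of the NE in its lower-bound instances; you should add the corresponding verification.
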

\begin{proof}
Let $G \in \G_2^s$. Let $W=\sum_i p_i$ be the total processing time of all jobs.
Assume first that  $s \le \frac{\sqrt{5}-1}{2}$.
For the minimum makespan objective, $OPT(G) \ge W/(1+s)$.
Also, for any NE $\sigma$, we have that $C_{max}(\sigma) \le W$, since every job can migrate to be last on the fast machine and have completion time at most $W$. Thus, PoA $\le s+1$.

Assume next that $s>\frac{\sqrt{5}-1}{2}$. Let job $a$ be a last job to complete in a worst Nash equilibrium $\sigma$, $p_1$ be the total processing time of all jobs different from $a$ on machine $1$, and $p_2$ be the total processing time of all jobs different from $a$ on machine $2$ in $\sigma$. Then since $\sigma$ is a Nash equilibrium, $C_{max}(\sigma)\leq p_1+p_a$ and $C_{max}(\sigma)\leq (p_2+p_a)/s$. Combining these two inequalities yields
\[C_{max}(\sigma)\leq \frac{W+p_a}{1+s}\leq \frac{s+2}{s+1}\cdot OPT(G),\]
where for the inequality we use that $OPT(G) \ge W/(1+s)$ and $OPT(G) \ge p_a$, and thus PoA$\leq \frac{s+2}{s+1}$.

For the PoS lower bound, assume first that $s<\frac{2}{\sqrt{5
}+1}$. Consider an instance consisting of two jobs, $a$ and $b$, where $p_a=1$ and $p_b=1/s$. The priority lists are $\pi_1=\pi_2=(a,b)$. The unique NE is that both jobs are on the fast machine. $C_a(\sigma)=1, C_b(\sigma)=1+1/s$. For every $s < \frac{\sqrt{5}-1}{2}$, it holds that $1+1/s< 1/s^2 $, therefore, job $b$ does not have a beneficial migration. An optimal schedule assigns job $a$ on the slow machine, and both jobs complete at time $1/s$. The corresponding PoS is $s+1$. \footnote{For $s=\frac{\sqrt{5}-1}{2}$, by taking $p_b=1/s+\epsilon$, the PoS approaches $(s+2)/(s+1)$ as $\epsilon \rightarrow 0$.}

Assume now that $s > \frac{\sqrt{5}-1}{2}$. 
Consider an instance consisting of three jobs, $x$, $y$ and $z$, where $p_x=1, p_y= s^2+s-1$, and $p_z=s+1$.
The priority lists are $\pi_1=\pi_2=(x,y,z)$. 
Note that $p_y \ge 0$ for every $s \ge \frac{\sqrt{5}-1}{2}$. In all NE, job $x$ is on the fast machine, and job $y$ is on the slow machine. Indeed, job $y$ prefers being alone on the slow machine since $s^2+s>\frac{s^2+s-1}{s}$. Job $z$ is indifferent between joining $x$ on the fast machine or $y$ on the slow machine, since $1+p_z=(p_y+p_z)/s=s+2$. In an optimal schedule, job $z$ is alone on the fast machine, and jobs $x$ and $y$ are on the slow machine.
Both machines have the same completion time $s+1$.
The PoS is $\frac {s+2} {s+1}$. 
\end{proof}

\begin{theorem}
\label{thm:EqInef2}
For the sum of completion times objective, PoA$(\G_2^s) = \Theta(n)$ and PoS$(\G_2^s) = \Theta(n)$ for all $s \le 1$. 
\end{theorem}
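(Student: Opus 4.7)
The plan is to establish the upper bound $\PoA(\G_2^c) \leq n$ and the matching lower bound $\mbox{PoS}(\G_2^c) \geq \Omega(n)$; combined with the trivial inequality $\mbox{PoS} \leq \PoA$, these two yield all four inequalities in the theorem. For the upper bound I would observe that in any Nash equilibrium $s$ and every job $i \in N$, $\cost_i(s) \leq W := \sum_{j \in N} w_j$, since $i$ can unilaterally deviate to the fast machine $M_1$, where its completion time is at most $c_1 \cdot W = W$. Summing over $n$ jobs gives $\sum_i \cost_i(s) \leq nW$. Any schedule $s'$ satisfies $\sum_i \cost_i(s') \geq \sum_i c_{s'_i} w_i \geq W$ because $c_j \geq 1$, so $\PoA(\G_2^c) \leq n$.

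For the lower bound, I plan to construct, for each fixed $c \geq 1$, an instance exhibiting an $\Omega(n)$ gap. Use two heavy jobs $H_1, H_2$ with $w_{H_1} = 1$, $w_{H_2} = w$, and $n-2$ light jobs $L_1, \ldots, L_{n-2}$ with $w_{L_k} = \epsilon$. Choose $w$ in the (always nonempty) range $1 < cw < 1 + w$; the uniform explicit value $w = 2/(2c-1)$ works. Take $\epsilon > 0$ small enough (e.g.\ $\epsilon = 1/n^3$) that terms of order $\epsilon n^2$ become negligible. Fix priority lists $\pi_1 = (H_1, L_1, \ldots, L_{n-2}, H_2)$ and $\pi_2 = (H_2, L_1, \ldots, L_{n-2}, H_1)$. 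I would then verify that every NE has $H_1$ together with all lights on $M_1$ and $H_2$ alone on $M_2$: (i) $H_1$ must play $M_1$, since its cost there is $1$ while on $M_2$ it is at least $c$; (ii) given $H_1 \in M_1$, the inequality $cw < 1 + w$ forces $H_2$ to $M_2$ (its cost on $M_2$ alone is $cw$ against $\approx 1 + w$ if last on $M_1$); (iii) given $H_1 \in M_1$ and $H_2 \in M_2$, the inequality $cw > 1$ forces every light onto $M_1$ (cost $\approx 1$ behind $H_1$ versus $\approx cw$ behind $H_2$). The resulting unique NE has total cost $\approx n + cw = \Theta(n)$.

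To bound the social optimum from above, I would exhibit the (non-equilibrium) schedule that places both heavy jobs on $M_1$ and all lights on $M_2$. On $M_1$ the two heavies contribute $1 + (1 + w)$; on $M_2$ the lights contribute $c\epsilon(n-2)(n-1)/2 = o(1)$ by the choice of $\epsilon$. Hence $\mathrm{OPT} \leq 2 + w + o(1) = O(1)$, and dividing gives $\mbox{PoS}(\G_2^c) \geq \Omega(n)$.

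The main obstacle is choosing $w$ so that $cw > 1$ and $cw < 1 + w$ hold simultaneously, i.e.\ $w \in (1/c, 1/(c-1))$ when $c > 1$ and $w > 1$ when $c = 1$; the uniform explicit choice $w = 2/(2c-1)$ lies in this interval for every $c \geq 1$, so a single family of constructions covers all delay ratios. A secondary technical step is to control the contribution of the $n-2$ lights in the OPT candidate, which is handled by taking $\epsilon = 1/n^3$.
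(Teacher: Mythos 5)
Your proof is correct and follows essentially the same route as the paper: the identical $O(n)$ upper bound on the PoA (every job can deviate to be last on the fast machine, so its NE cost is at most $W=\sum_i w_i$, while $OPT \ge W$), together with a lower-bound gadget in which two blocking jobs are pinned to the fronts of the two machines in every NE, forcing each of the $n-2$ tiny jobs to incur cost about $1$, whereas the optimum segregates the tiny jobs onto one machine. The paper's gadget uses a single global priority list $(a,b,Z)$ with $w_b=1/c$ instead of your machine-dependent lists with $w_{H_2}=2/(2c-1)$, but the mechanism and the resulting bounds are the same (both write-ups, incidentally, gloss over the $c=1$ case where the first blocking job is only weakly, not strictly, tied to its machine, which is easily patched).
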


\begin{proof}
For the upper bound, note that $OPT(G) \ge \sum_i p_i$ and in every NE schedule $\sigma$, $C_i(\sigma) \le \sum_i p_i$. This implies PoA $= \Theta(n)$.
For the PoS lower bound, consider an instance consisting of a set $Z$ of $n-2$ jobs with processing time $\epsilon$, and two jobs, $a$ and $b$, where $p_a=1$ and $p_b=s$.
The priority lists are $\pi_1=\pi_2=(a,b,Z)$. Note that $p_a+p_b > p_b/s$, therefore, in every NE, job $a$ is first on $M_1$ and job $b$ is first on $M_2$. Thus, every $\epsilon$-job has completion time at least $1$. The sum of completion times is at least $n+ O(n^2)\epsilon$. An optimal schedule assigns $a$ and $b$ on $M_1$ and all the $\epsilon$-jobs on $M_2$. The sum of completion times is at most $3+O(n^2)\epsilon/s$. For small enough $\epsilon$, we get that the PoS is $\Theta(n)$. 
\end{proof}

We turn to analyze the equilibrium inefficiency of the class $\G_3$, consisting of games played on identical-speed machines, having machine-based priority lists. The proof of the following theorem is based on the observation that every NE schedule is a possible outcome  of Graham's {\em List-scheduling} (LS) algorithm \cite{Gra66}. 
\begin{theorem}\label{thm:iden}
For the min-makespan objective, PoA$(\G_{3})=$PoS $(\G_{3})= 2-\frac 1 m$.
\end{theorem}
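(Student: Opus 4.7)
The plan is to establish both bounds by reducing to the classical Graham List-Scheduling analysis, then exhibit the standard tight instance but verify it is realizable as a pure NE in $\mathcal{G}_3$.

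For the upper bound, let $s$ be any NE and let $j^*$ be the machine of maximum total weight $L$, so that $C_{\max}(s)=c\cdot L$, where $c$ is the common machine delay. Let $i^*$ be the job with lowest priority on $j^*$ among those assigned to $j^*$; then $\cost_{i^*}(s)=c\cdot L$. The key observation is that if $i^*$ deviated to any machine $j\neq j^*$, its completion time would be $c\cdot(w^{i^*}_j(s)+w_{i^*})\le c\cdot(L_j+w_{i^*})$, because the set of jobs on $j$ with priority above $i^*$ has weight at most the total weight $L_j$ of $j$. The NE condition then yields $L\le L_j+w_{i^*}$, i.e.\ $L_j\ge L-w_{i^*}$, for every $j$. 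Summing over all machines gives $W=\sum_j L_j\ge mL-(m-1)w_{i^*}$, hence
\[
C_{\max}(s)=cL\le c\cdot\frac{W}{m}+c\cdot\frac{m-1}{m}w_{i^*}.
\]
Since $OPT(G)\ge cW/m$ (load must be balanced on $m$ machines) and $OPT(G)\ge c\cdot w_{i^*}$ (the heaviest job must be processed somewhere), this gives $C_{\max}(s)\le(2-\tfrac{1}{m})\,OPT(G)$, which establishes $\mathrm{PoA}(\mathcal{G}_3)\le 2-\tfrac{1}{m}$.

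For the matching lower bound (which also applies to PoS), I would adapt Graham's classical tight example. Take $m$ identical machines with delay $c=1$, together with $m(m-1)$ unit-weight jobs $u_1,\dots,u_{m(m-1)}$ and one big job $B$ of weight $m$. Use the global priority list $(u_1,u_2,\dots,u_{m(m-1)},B)$ on every machine. The optimum schedule places $B$ alone on one machine and spreads the unit jobs evenly over the other $m-1$ machines ($m$ units per machine), giving makespan $m$.

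To complete the PoS lower bound I need to show that every NE has makespan $2m-1$. First, since $B$ has the lowest priority on every machine, it contributes only to its own machine's load; by the NE condition applied to $B$, it must sit on a machine with minimum number of unit jobs. Second, for the unit jobs, applying the NE condition to the lowest-priority unit job on any machine shows that the loads $n_j$ of unit jobs across machines differ by at most one, which, since $\sum_j n_j=m(m-1)$, forces $n_j=m-1$ for every $j$. Hence in any NE the makespan is $(m-1)+m=2m-1$, yielding a ratio of $(2m-1)/m=2-1/m$. The delicate step is this second one --- arguing that the NE condition on unit jobs pins down the balanced distribution exactly --- and it is handled cleanly by looking at the last-priority unit job on a hypothetical overloaded machine and showing it would prefer migrating. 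Combining with the upper bound gives $\mathrm{PoA}(\mathcal{G}_3)=\mathrm{PoS}(\mathcal{G}_3)=2-\tfrac{1}{m}$.
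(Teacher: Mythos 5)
Your proof is correct and follows essentially the same route as the paper: the upper bound is Graham's List-Scheduling argument (which you inline by applying the NE condition to the last-finishing job rather than citing the LS approximation guarantee, as the paper does), and the lower bound uses the identical instance of one weight-$m$ job with lowest priority plus $m(m-1)$ unit jobs, for which you spell out the balancing argument the paper leaves as ``easy to verify.'' The only detail worth adding is a word on why this instance admits an NE at all (guaranteed since it lies in $\G_3$), so that the PoS bound is non-vacuous.
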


\begin{proof}
Let $\sigma$ be an NE schedule. We claim that $\sigma$ is a possible outcome of Graham's {\em List-scheduling} algorithm \cite{Gra66}. Indeed, assume that List-scheduling is performed and the jobs are considered according to their start time in $\sigma$. Every job selects its machine in $\sigma$, as otherwise, we get a contradiction to the stability of $\sigma$. Since List-scheduling provides a $2- \frac 1 m$ approximation to the makespan, we get the upper bound of the PoA.

For the lower bound, given $m>1$, the following is an instance for which PoS$=2-\frac 1 m$.
The instance consists of a single job with processing time $m$ and $m(m-1)$ unit jobs.
In all priority lists, the heavy job is last and the unit jobs are prioritized arbitrarily. It is easy to verify that in every NE profile the unit jobs are partitioned in a balanced way among the machines, and the heavy job is assigned as last on one of the machines. Thus, the completion time of the heavy job is $2m-1$. On the other hand, an optimal assignment assign the heavy job on a dedicated machine, and partition the unit job in a balanced way among the remaining $m-1$ machines. In this profile, all the machines have load $m$. The corresponding PoS is $\frac{2m-1}{m} = 2 - \frac 1 m$. 
\end{proof}  

\begin{theorem}
\label{thm:ineffc1}
For the sum of completion times objective, PoA$(\G_{3})\le \frac {n-1} m +1$, and for every $\epsilon >0$, PoS$(\G_3) \ge \frac{n}{m}-\epsilon$. 
\end{theorem}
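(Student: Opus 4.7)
For the PoA upper bound my plan is a one-step deviation argument. I would fix any NE $s$, denote $W=\sum_i w_i$ and $L_j=\sum_{i:s_i=j} w_i$, and pick a machine $j^*$ of minimum load, so $L_{j^*}\le W/m$. For a job $i$ with $s_i=j^*$, trivially $\cost_i(s)\le L_{j^*}$, and for every other $i$, stability with respect to migration to $j^*$ gives $\cost_i(s)\le w_i+L_{j^*}$, since the total weight of jobs on $j^*$ preceding $i$ in $\pi_{j^*}$ is at most $L_{j^*}$. Summing and using $\sum_{i:s_i\ne j^*} w_i = W-L_{j^*}$ yields $\sum_i \cost_i(s)\le (n-1)L_{j^*}+W\le \bigl(\tfrac{n-1}{m}+1\bigr)W$, and the PoA bound then follows from $OPT(G)\ge W$.

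For the PoS lower bound I would construct, for any $\epsilon>0$ and sufficiently large $m$ and $n$, the following instance: $m$ identical machines, $m$ ``big'' jobs $a_1,\dots,a_m$ of weight $1$, and $n-m$ ``small'' jobs of weight $\delta$ (with $\delta$ small), where $M_j$ has priority list $\pi_j=(a_j,a_{j+1},\dots,a_{j-1},\mathrm{smalls})$, cyclic in the big-indices. The main obstacle, which is the heart of the argument, is to show that every NE $s$ satisfies $\sum_i\cost_i(s)\ge n$. I plan to argue this in two steps: first, $a_j$ always has the deviation to $M_j$ at cost $w_{a_j}=1$ available (since it is the priority-$1$ big there), so the NE property forces $\cost_{a_j}(s)=1$ for every $j$; second, this rules out two bigs sharing any machine in NE (the trailing one would incur $\ge 2$), so with $m$ bigs and $m$ machines, by pigeonhole each machine must host exactly one big. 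Since the smalls are last in every priority list, every small lies behind its machine's big and hence has cost at least $1$, giving $\sum_i\cost_i(s)\ge m+(n-m)=n$.

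To bound $OPT(G)$ from above I would place $a_1,a_2$ on $M_1$ (incurring costs $1$ and $2$), leave each other $a_j$ alone on $M_j$ (cost $1$), and assign all smalls to $M_2$, which now contains no big so that the $\ell$-th small pays only $\ell\delta$. This gives $OPT(G)\le m+1+O(\delta n^2)$, and sending $\delta\downarrow 0$ yields $\mathrm{PoS}(G)\ge n/(m+1)=n/m-n/(m(m+1))$. Given any $\epsilon>0$, choosing $m$ large enough that $n/(m(m+1))<\epsilon$ then delivers $\mathrm{PoS}(G)\ge n/m-\epsilon$, as required.
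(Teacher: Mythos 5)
Your proof is correct. The PoA upper bound is the same averaging argument as the paper's: the paper bounds each job's cost by $w_a + \frac{1}{m}\sum_{i\neq a} w_i$ via the machine of minimum load excluding $a$, while you use one global minimum-load machine $j^*$ with $L_{j^*}\le W/m$; both sum to $(\frac{n-1}{m}+1)W$ against $OPT\ge W$. The PoS lower bound uses a genuinely different gadget. The paper takes $m$ unit jobs $j_1,\dots,j_m$ and $(k-1)m$ tiny jobs, with $\pi_i=(j_i,\ \epsilon\text{-jobs},\ \text{other unit jobs})$ — crucially the tiny jobs sit \emph{above} all foreign unit jobs, so the optimum can put $k-1$ tiny jobs ahead of one unit job on every machine and achieve cost exactly $m+O(\epsilon)$, giving the ratio $n/m$ directly. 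You place the small jobs \emph{last} in every priority list, which makes the NE analysis cleaner (every small job is automatically behind its machine's big job, with no case analysis on where smalls sit relative to foreign bigs), but forces the optimum to sacrifice a machine by doubling up two big jobs, yielding $OPT\le m+1+O(\delta n^2)$ and hence only $n/(m+1)$; you correctly recover $n/m-\epsilon$ by letting $m$ grow at a fixed ratio $n/m$. Both arguments share the same key mechanism — a unilateral deviation to the machine where a unit job has top priority pins its equilibrium cost to $1$, so each machine hosts exactly one unit job in front, delaying everything else — and both establish the stated bound.
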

\begin{proof}
For the upper bound of the PoA, note that, independent of the number of machines, the sum of completion times is at least $\sum_i p_i$. Also, for every job $a$, if $a$ is not assigned on any machine, then there exists a machine with load at most $\frac{\sum _{i\neq a} p_i}{m}$, therefore, in every NE profile, the completion time of job $a$ is at most $\frac{\sum _{i\neq a} p_i}{m}+p_a$. Summing this equation for all the jobs, we get that the sum of completion times of any NE is at most 
$$\frac{n\sum _i p_i - \sum _i p_i}{m}+ \sum_i p_i = \sum_i p_i \frac{n-1}{m}+1.$$
We conclude that the PoA is at most $\frac {n-1} m +1$. 

For the PoS lower bound, given $m$, let $\epsilon \rightarrow 0$, and consider an instance with $n=km$ jobs, out of which, $m$ jobs $j_1,\ldots,j_m$ have length $1$ and the other $(k-1)m$ jobs have length $\epsilon$. Assume that $\pi_i$ gives the highest priority to $j_i$ then to all the $\epsilon$-jobs, and then to the other $m-1$ unit jobs. 

In every NE, machine $i$ processes first the unit-job $j_i$, followed by $k-1$ $\epsilon$-jobs. Thus, every job has completion time at least $1$. The sum of completion times is  $n +O(mk^2)\epsilon$.
On the other hand, an optimal solution assigns on machine $i$ a set of $k-1$ $\epsilon$-jobs followed by one unit-job $j_k$ for $k \neq i$, resulting in a sum of completion times of $m+ O(mk^2)\epsilon$.
The PoS tends to $\frac{n}{m}$ as $\epsilon$ decreases.
\end{proof}


The last class of instances for which an NE is guaranteed to exist includes games with a global priority list, and is denoted by $\G_4$. It is easy to verify that for
this class, the only NE profiles are those produced by List-Scheduling algorithm, where the jobs are considered according to their order in the priority list. Different NE may be produced by different tie-breaking rules. Thus, the equilibrium inefficiency is identical to the approximation ratio of LS, as analyzed in \cite{10.1145/167088.167248}. The tie-breaking in an execution of LS can be controlled by adding some high priority jobs with very small processing times. Thus, since the analysis of LS is tight \cite{10.1145/167088.167248}, it is easy to show that the PoS is $\Theta(\log{m})$ as well.
\begin{theorem}
For the min-makespan objective, PoS$(\G_4)=$PoA$(\G_4)=\Theta(\log{m})$.
\end{theorem}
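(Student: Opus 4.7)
The plan is to exploit the global priority list to reduce NE analysis to the analysis of Graham's List-Scheduling (LS) algorithm on related machines, and then transfer the classical bounds.

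\textbf{Step 1 (NE $=$ LS outcomes).} First I would prove that, for $G\in\G_4$, a profile $s$ is an NE if and only if $s$ arises from LS processing jobs in the order of the global priority list $\pi$ (for some tie-breaking rule). The crucial observation is that since all machines share $\pi$, every job placed after some job $i$ has strictly lower priority than $i$ and hence goes behind $i$ on whichever machine it chooses, so $\cost_i$ is determined at the moment $i$ is placed. Thus the machine minimizing $i$'s cost at placement time remains $i$'s best-response machine in the final profile. Conversely, scanning any NE in $\pi$-order shows each job is on its cost-minimizing machine given the already-placed higher-priority jobs, which is exactly an LS execution.

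\textbf{Step 2 (upper bound $O(m)$).} For any NE $s$ and any job $i$, the deviation inequality applied to every machine $j$ yields $\cost_i(s)\le c_j(W_j^{>i}(s)+w_i)$, where $W_j^{>i}(s)$ is the total weight of strictly higher-priority jobs assigned to $j$ in $s$. Dividing by $c_j$ and summing over $j\in M$ gives
\[
\cost_i(s)\cdot\sum_{j\in M}\frac{1}{c_j}\le\sum_{j\in M}W_j^{>i}(s)+m\,w_i\le W+(m-1)w_i,
\]
where $W=\sum_{k\in N}w_k$. Combining with the standard total-work lower bound $OPT(G)\ge W/\sum_j 1/c_j$ yields $\cost_i(s)/OPT(G)\le 1+(m-1)w_i/W\le m$, so $\PoA(\G_4)=O(m)$.

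\textbf{Step 3 (lower bound $\Omega(m)$).} By the equivalence in Step 1, $\PoS(\G_4)$ coincides with the best achievable makespan ratio of LS on related machines over an adversarially chosen job order. I would import the tight $\Omega(m)$ family of instances underlying the LS approximation-ratio analysis in \cite{doi:10.1137/0209007}, and verify that every LS execution on it yields makespan $\Omega(m)\cdot OPT$.

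\textbf{Main obstacle.} Steps 1 and 2 are routine. The crux is Step 3: the classical LS lower bound is usually phrased as a worst-case (PoA-like) guarantee, so one has to sharpen it to a best-case (PoS-like) statement by ruling out any favourable tie-breaking. Showing that the lower bound of $\Omega(m)$ holds uniformly over all tie-breakings is the delicate part.
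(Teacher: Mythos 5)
Your proposal follows the paper's own proof: the paper likewise observes that the NE profiles of $\G_4$ are exactly the List-Scheduling outcomes obtained by processing jobs in the order of the global priority list (one NE per tie-breaking rule), and then imports the tight $\Theta(m)$ analysis of LS from the very reference you name, for both PoA and PoS. The only differences are presentational: your Step 2 derives the $O(m)$ upper bound explicitly where the paper merely cites the LS approximation ratio, and the ``main obstacle'' you flag (making the lower bound uniform over tie-breakings so that it applies to the PoS) is precisely the point the paper dispatches with the one-line remark that the LS analysis is tight.
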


For the sum of completion times objective, we note that the proof of Theorem \ref{thm:EqInef2} for two related machines uses a global priority list. The analysis of the PoA is independent of the number and speeds of machines.
\begin{theorem}
For the sum of completion times objective,  PoA$(\G_4) = \Theta(n)$ and PoS$(\G_4) = \Theta(n)$.
\end{theorem}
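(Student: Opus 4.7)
The plan is straightforward: both halves reuse machinery already developed in the paper, with a global-priority-list instance providing the matching lower bound.

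For the upper bound PoA$(\G_4) = O(n)$, I would adapt the argument used to prove PoA$(\G_2^c) = O(n)$ in Theorem \ref{thm:EqInef2}. By rescaling we may assume $c_{\min} := \min_{j \in M} c_j = 1$, since the price of anarchy is invariant under scaling all $c_j$ by the same positive constant. Then $OPT(G) \ge \sum_i w_i$, since each job $i$ contributes at least $c_{s_i}\cdot w_i \ge w_i$ to its own completion time. For the NE cost, I fix any NE $s$ and any job $i$, and consider $i$'s deviation to a machine $j^\star$ with $c_{j^\star} = 1$. Because the priority list is global, the jobs preceding $i$ on $j^\star$ after the deviation form a subset of $\{i' : \pi(i') < \pi(i)\} \subseteq N$, so $i$'s completion time after deviating is at most $1 \cdot \sum_{i'} w_{i'}$. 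Stability gives $\cost_i(s) \le \sum_{i'} w_{i'}$, and summing over the $n$ jobs yields $\sum_i \cost_i(s) \le n\cdot \sum_i w_i \le n\cdot OPT(G)$.

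For the lower bound PoS$(\G_4) = \Omega(n)$, I would reuse the two-machine construction from the proof of Theorem \ref{thm:EqInef2}: a heavy job $a$, a medium job $b$, and $n-2$ jobs of weight $\epsilon$, on two machines with priority lists $\pi_1 = \pi_2 = (a,b,Z)$. Since the two priority lists coincide, the instance belongs to $\G_4$. The same calculation as in Theorem \ref{thm:EqInef2} shows that the unique NE has sum of completion times $\Omega(n)$ while the social optimum is $O(1) + o(1)$, giving PoS$(\G_4) = \Omega(n)$. Combined with PoA$(\G_4) \ge $ PoS$(\G_4)$ and the upper bound above, this yields PoA$(\G_4) = $ PoS$(\G_4) = \Theta(n)$.

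There is no real obstacle here: the upper-bound migration argument in Theorem \ref{thm:EqInef2} uses nothing specific about $m=2$, only that some machine achieves $c_{\min}$ and that a deviating job cannot attract more than the full set $N$ in front of itself, which is precisely what a global priority list guarantees. The one small bookkeeping point is the initial normalization $c_{\min}=1$, which is needed because the model allows arbitrarily small $c_j$; after normalizing, the bound $OPT(G)\ge \sum_i w_i$ holds cleanly and the rest of the argument is immediate.
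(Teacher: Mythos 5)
Your proposal is correct and matches the paper's own argument, which likewise observes that the upper bound of Theorem~\ref{thm:EqInef2} ($OPT(G)\ge\sum_i w_i$ and $\cost_i(s)\le\sum_i w_i$ for every NE $s$, via deviation to a fastest machine) is independent of the number and delays of the machines, and that its lower-bound instance already has $\pi_1=\pi_2$ and hence lies in $\G_4$. Your explicit normalization $c_{\min}=1$ is a small point of added care that the paper leaves implicit.
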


\section{Hardness of Computing an NE with Low Social Cost}
Correa and Queyranne~\cite{correa2012efficiency} showed that if all the machines have the same speeds, but arbitrary priority lists, then an NE is guaranteed to exist, and can be calculated by a simple greedy algorithm.

In this section we discuss the complexity of computing a good NE in this setting. We refer to both objectives of minimum makespan and minimum sum of completion times. For both objectives, our results are negative. Specifically, not only that it is NP-hard to compute the best NE, but it is also hard to approximate it, and to compute an NE whose social cost is better than the one guaranteed by the PoA bound.

Starting with the minimum makespan, in Theorem \ref{thm:iden}, we have shown that the PoA for this objective is at most $2-\frac 1 m$. We show that we cannot hope for a better algorithm than the simple greedy algorithm. More formally, we prove that it is NP-hard to approximate the best NE within a factor of $2-\frac 1 m-\epsilon$ for all $\epsilon>0$.

\begin{theorem}
\label{thm:hardApproxCmax}
If for all machines $s_j=1$, then it is NP-hard to approximate the best NE w.r.t. the makespan objective within a factor of $2-\frac{1}{m}-\epsilon$ for all $\epsilon >0$.
\end{theorem}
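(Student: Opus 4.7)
The plan is a gap-preserving reduction from $3$-PARTITION, exploiting the fact that $\mathrm{PoS}(\G_3)=2-1/m$ from Theorem \ref{thm:iden} is realized by the simple construction consisting of a single heavy job together with many light jobs. Given a $3$-PARTITION instance with items $a_1,\ldots,a_{3k}$ summing to $kB$ and each $a_i\in(B/4,B/2)$, asking whether the items split into $k$ triples each of sum $B$, I would construct a scheduling game on $m=k+1$ identical machines whose jobs are the $3k$ partition items plus one heavy job $h$ of weight $B$ (total weight $mB$, average load $B$ per machine). The goal is that YES instances admit a best NE of makespan $B$, while NO instances have every NE of makespan at least $(2-1/m)\cdot B - o(B)$; a $(2-1/m-\epsilon)$-approximation of the best NE would then decide $3$-PARTITION.

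The priority-list design follows the PoS-lower-bound template: on every machine, $h$ is placed at the bottom of the priority list, so that wherever $h$ sits it completes after the full load of its machine. The partition items are ordered by decreasing weight on each machine. In the YES case, given a balanced partition $T_1,\ldots,T_k$, I claim the assignment that puts $T_j$ on machine $j$ and $h$ alone on machine $m$ is an NE: (i) $h$ cannot profitably migrate, since moving places it last on another load-$B$ machine giving completion $2B$; (ii) no partition item $a_i\in T_j$ benefits from moving to another non-heavy machine $j'$, because the full triple $T_{j'}$ sits above it in $j'$'s priority list, yielding completion at least $B$ plus $a_i$; (iii) migrating to $h$'s machine places the item ahead of $h$ so that its new completion equals its own weight, which is no smaller than its current decreasing-weight-ordered completion. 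The subtle step (iii) is what dictates the decreasing-weight priority, and may need to be reinforced by a few auxiliary "blocker" items sitting above $h$ on machine $m$ to rule out strict improvements.

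In the NO case, the granularity of $3$-PARTITION forces any configuration of partition items across the $k$ non-heavy machines to leave some machine with load at least $B+\Omega(1)$ (or conversely, a machine below $B-\Omega(1)$). Combined with the NE stability of $h$, which pins $h$'s machine load to within $B$ of every other machine's load, and with the standard exchange argument that relates the maximum-load machine to the average, the max load must be at least $2B-o(B)$. A routine scaling step (multiplying all weights by a large integer $N$) converts the $o(B)$ additive slack into a multiplicative $\epsilon$ loss, yielding the factor $2-1/m-\epsilon$. Strong NP-hardness of $3$-PARTITION ensures the scaling keeps the input polynomially bounded.

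The main obstacle is step (iii) of the YES analysis together with ruling out every NE in the NO case. The natural migration "into $h$'s machine" is the delicate one: a partition item sees itself ahead of the low-priority $h$, so its completion there equals its own weight, which is generically \emph{smaller} than its current completion on a loaded triple-machine. The fix is to either add blocker jobs pinned to $h$'s machine by their own priority structure (so any migrant slots behind mass of at least $B$), or to promote $h$ to the highest priority on its dedicated machine while keeping it last elsewhere; either route complicates the case analysis but preserves the PoS-mimicking structure. For the NO case, the analysis must strengthen Theorem \ref{thm:iden}'s PoS lower bound from "some NE achieves the ratio" to "every NE does", which I would do by combining the NE condition on $h$'s machine with a pigeonhole argument on partition loads driven by $3$-PARTITION's item-size bounds.
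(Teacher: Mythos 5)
There is a genuine gap, and it is in the NO case. A NO instance of $3$-PARTITION only forces an \emph{additive} imbalance: the $3k$ items still sum to $kB$ and can be spread over the $k$ light machines with every load in $[B-O(B), B+O(B)]$ and some machine merely at $B+1$ (integer weights). Nothing in your construction then prevents an NE in which the items are distributed nearly evenly and $h$ completes at roughly $B$; more strongly, on identical machines every NE is a List-Scheduling outcome, so its makespan is at most (minimum load before the last job) plus one job weight, i.e.\ at most about $B+\max_i a_i\le 3B/2$ here, and the \emph{best} NE will typically be $B+O(1)$. Your claimed conclusion that ``the max load must be at least $2B-o(B)$'' does not follow from the pigeonhole bound plus the stability of $h$; the $+\Omega(1)$ gap of $3$-PARTITION, after the scaling you invoke, becomes a multiplicative loss of $1+1/B\to 1$, not $2-1/m$. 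The blocker/priority fixes you sketch for step (iii) do not repair this: adding weight-$B$ blockers on $h$'s machine raises the YES-case makespan to $2B$ and collapses the gap, while promoting $h$ to top priority on its own machine leaves the NO case with near-balanced NEs of makespan $B+O(1)$.

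The missing idea is a mechanism that \emph{amplifies} the combinatorial infeasibility into a factor-$2$ makespan gap. The paper's reduction (from $3$-bounded $3$-dimensional matching) does this with a priority cascade rather than load imbalance: the matching's existence determines whether certain $\epsilon$-weight element-jobs can all sit on their triplet machines; if not, one of them lands on $M_2$ ahead of a pivotal job $b$ of weight $m-1$, which pushes $b$ onto $M_1$ ahead of the heavy job $a$ of weight $m$; and a reservoir of $(m-1)^2$ unit jobs with priority above $a$ everywhere except $M_1$ guarantees that once $a$ is displaced from the front of $M_1$ it starts no earlier than time $m-1$ in \emph{every} NE. That is what produces makespan $\ge 2m-1$ versus $m+O(\epsilon)$, i.e.\ the ratio $2-1/m-\epsilon$. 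If you want to keep a partition-type source problem you would need to build an analogous cascade in which infeasibility forces a heavy job behind a full machine-load of higher-priority work; the even/uneven load dichotomy alone cannot deliver the bound.
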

\begin{proof}
We show that for every $\epsilon>0$, there is an instance on $m$ identical machines for which it is NP-hard to decide whether the game has an NE profile with makespan at most $m+3\epsilon$ or at least $2m-1$.

The hardness proof is by a reduction from 
$3$-bounded $3$-dimensional matching ($3$DM-$3$).
Recall that the input of the 3DM-3 problem is a set of triplets $T \subseteq X \times Y \times Z$, where $|T|\ge n$ and $|X|=|Y|=|Z|=n$, and the number of occurrences of every element of $X
\cup Y \cup Z$ in $T$ is at most $3$. The goal is to decide whether $T$ has a 3-dimensional matching of size $n$.

Given an instance of $3$DM-$3$ and $\epsilon >0$, consider the following game on $m=|T|+2$ machines, $M_1,M_2,\ldots,M_{|T|+2}$. 
The set of jobs includes job $a$ with processing time $m$, job $b$ with processing time $m-1$, a set $D$ of $|T|-n$ dummy jobs with processing time $3\epsilon$, two dummy jobs $d_1,d_2$ with processing time $2\epsilon$, a set $U$ of $(m-1)^2$ unit jobs, and $3n$ jobs with processing time $\epsilon$ - one for each element in $X \cup Y \cup Z$.

We turn to describe the priority lists. We remark that when the list includes a set, it means that the set elements appear in arbitrary order. The symbol $\phi$ means that the remaining jobs appear in arbitrary order. For the first machine, $\pi_1=(d_1,b,a,U,\phi)$. For the second machine, $\pi_2=(d_2,X,Y,Z,b,U,a,d_1)$.
The $m-2$ right machines are {\em triplet-machines}. 
For every $t=(x_i,y_j,z_k) \in T$, the priority list of the triplet-machine corresponding to $t$ is $(D,x_i,y_j,z_k,U,\phi)$.

The heart of the reduction lies in determining the priority lists. The idea is that if a $3$D-matching exists, then job $b$ would not prefer $M_1$ over $M_2$. This will enable job $a$ to be assigned early on $M_1$. However, if a $3$D-matching does not exist, then some job originated from the elements in $X \cup Y \cup Z$ will precede job $b$ on $M_2$, and $b$'s best-response would be on $M_1$. The jobs in $U$ have higher priority than job $a$ on all the machines except for $M_1$, thus, unless job $a$ is on $M_1$, it is assigned after $|U|/(m-1)$ unit-jobs from $U$, inducing a schedule with high makespan.

Observe that in any NE, the two dummy jobs with processing time $2\epsilon$ are assigned as the first jobs on $M_1$ and $M_2$. Also, the dummy jobs in $D$ have the highest priority on the triplet-machines, thus, in every NE, there are $|D|=|T|-n$ triplet-machines on which the first job is from $D$. 

Figure~\ref{fig:hardCmax} provides an example for $m=5$.

\begin{figure*}[ht]
\begin{center}
\includegraphics[height=6cm]{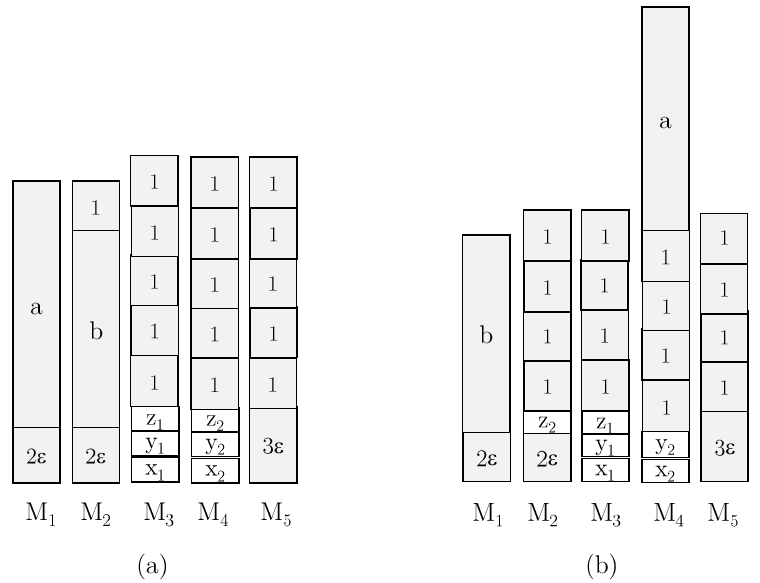}
\end{center}
\caption{Let $n=2$ and $T=\{(x_1,y_1,z_1),(x_2,y_2,z_2),(x_1,y_2,z_2)\}$.
 (a) an NE given the matching $T'=\{(x_1,y_1,z_1),(x_2,y_2,z_2)\}$. The makespan is $5+3\epsilon$. (b) an NE if a matching of size $2$ is not found. Job $z_2$ is stable on $M_2$, thus, job $b$ prefers $M_1$ over $M_2$. The makespan is $9+2\epsilon$.}
\label{fig:hardCmax}
\end{figure*}

In order to complete the gap reduction, we need the following two claims for the upper and lower threshold. First, we show that if a perfect matching exists, then this guarantees an NE in the associated scheduling problem instance with makespan at most $m+3\epsilon$.
\begin{claim}\label{clm:1:hardApproxCmax}
If a $3$D-matching of size $n$ exists, then there is an NE with makespan $m+3\epsilon$.
\end{claim}
\begin{proof}
Let $T'$ be a matching of size $n$. Assign the jobs of $X \cup Y \cup Z$ on the triplet-machines corresponding to $T'$ and the jobs of $D$ on the remaining triplet-machines. Also, assign $d_1$ and $d_2$ on $M_1$ and $M_2$ respectively. $M_1$ and $M_2$ now have load $2\epsilon$ while the triplet machines have load $3\epsilon$. Next, assign job $a$ on $M_1$ and job $b$ on $M_2$. Finally, add the unit-jobs as balanced as possible: $m$ jobs on each triplet-machine and a single job after job $b$ on $M_2$. It is easy to verify that the resulting assignment is an NE. Its makespan is $m+3\epsilon$.  
\end{proof}

The next claim proves the other direction of the reduction. That is, any NE with makespan less than $2m-1$ induces a perfect matching.
\begin{claim}\label{clm:2:hardApproxCmax}
If there is an NE with makespan less than $2m-1$, then there exists a $3$D-matching of size $n$.
\end{claim}
\begin{proof}
Let $\sigma$ be an NE whose makespan is less than $2m-1$. Since $p_a=m$ and $p_b=m-1$, this implies that $a$ is not assigned after $b$ on $M_1$ or on $M_2$. Also, since jobs of $U$ have higher priority than $a$ on all the machines except for $M_1$, it holds that $a$ is not assigned after $m-1$ unit-jobs. Thus, it must be that job $a$ is processed on $M_1$ and job $b$ is not on $M_1$. Job $b$ does not prefer $M_1$ over $M_2$ only if it starts its processing right after job $d_2$ on $M_2$. Since the jobs of $X \cup Y \cup Z$ have higher priority than job $b$ on $M_2$, they are all assigned on triplet-machines and starts their processing after jobs of total processing time at most $2\epsilon$. Thus, every triplet machine processes at most three jobs of $X \cup Y \cup Z$ - the jobs corresponding to the triplet, whose priority is higher than the priority of the unit-jobs of $U$. Moreover, since the jobs of $D$ have higher priority on the triplet-machines, there are $|T|-n$ triplet-machines on which the jobs of $D$ are first, and exactly $n$ machines each processing exactly the three jobs corresponding to the machine's triplet. Thus, the assignment of the jobs from $X \cup Y \cup Z$ on the triplet-machines induces a matching of size $n$.
\end{proof}

Claims \ref{clm:1:hardApproxCmax} and \ref{clm:2:hardApproxCmax} conclude the proof of the theorem. 
\end{proof}

We turn to analyze the complexity of computing the best NE with respect to the sum of completion times. Traditionally, this objective is simpler than minimizing the makespan, as the problem can be solved efficiently by SPT-rule if there are no priorities. We show that even in the simple case of identical machines, in which an NE is guaranteed to exists~\cite{correa2012efficiency}, it is NP-hard to approximate the solution's value. 

\begin{theorem}
\label{thm:hardApproxSumC}
If for all machines $s_j=1$, then, for any $r>1$, it is NP-hard to approximate the best NE w.r.t. the sum of completion times within factor $r$.
\end{theorem}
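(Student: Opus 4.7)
The plan is to reduce from 3-bounded 3-dimensional matching (3DM-3), following the blueprint of Theorem~\ref{thm:hardApproxCmax}, but inserting a scaling parameter $K$ chosen according to the target approximation ratio $r$. The goal is to arrange that yes- and no-instances of 3DM-3 translate into best-NE sum of completion times of order $\Theta(K) + O(\mathrm{poly}(n))$ and $\Omega(K^2)$ respectively, so that setting $K = \lceil 2r \rceil$ yields the claimed $r$-inapproximability.

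Given a 3DM-3 instance and $r > 1$, I would build an identical-machines game consisting of: (i) a variant of the base gadget from Theorem~\ref{thm:hardApproxCmax}, whose role is again to detect whether the instance admits a matching of size $n$; (ii) a block of $K$ new unit-weight ``heavy'' jobs $h_1, \dots, h_K$; (iii) one designated ``pile-up'' machine $M_P$ on which $h_1, \dots, h_K$ occupy the top $K$ positions of the priority list; (iv) $K$ additional ``free'' machines intended to host the $h_i$ one-per-machine in the yes-case; and (v) a population of low-weight ``blocker'' jobs whose priority lists on every non-$M_P$ machine place them ahead of every $h_i$. In the yes-case, mimicking the corresponding analysis of Theorem~\ref{thm:hardApproxCmax}, the blockers and original jobs can be absorbed by the triplet-machines, leaving the $K$ free machines empty, so there exists an NE in which each $h_i$ is alone on its own free machine and the total sum of completion times is $\Theta(K) + O(\mathrm{poly}(n))$. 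In the no-case, the base gadget forces blockers to obstruct every non-$M_P$ machine sufficiently that $M_P$ becomes a strict best-response for every $h_i$; consequently every NE piles all $K$ heavy jobs on $M_P$ with completion times $1, 2, \dots, K$, contributing $\Theta(K^2)$ to the objective.

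The principal obstacle is the no-case analysis: I must show that in \emph{every} NE of a no-matching instance all $K$ heavy jobs pile up on $M_P$, despite each being selfish. This calls for engineering priorities so that, in the absence of a matching, every non-$M_P$ machine accumulates at least $K - 1$ units of blocker weight in front of each $h_i$, making any deviation worse than being ranked last on $M_P$. A secondary task is to verify that the $h_i$'s do not destabilize the yes-case NE, which should follow by checking best responses against the $K$ initially empty free machines. Finally, one must choose weights so that the blockers' own contribution to the sum is $o(K^2)$, ensuring that the $\Theta(K^2)$ pile-up term dominates and that the best-NE ratio between no- and yes-instances is at least~$r$.
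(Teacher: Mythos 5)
There is a genuine quantitative gap in your no-case construction, and it is fatal to achieving an arbitrary ratio $r$. To keep all $K$ unit-weight heavy jobs stacked on $M_P$ \emph{in equilibrium}, the job in position $K$ (completion time $K$) must not want to deviate, so \emph{every} other machine must carry at least $K-1$ units of weight that precede it. Since you introduce $K$ free machines (so that the yes-case can host the $h_i$ one per machine), the blockers must have total weight $\Omega(K\cdot K)=\Omega(K^2)$. But every job contributes at least its own processing time to the sum of completion times in \emph{every} schedule, so this $\Omega(K^2)$ of blocker mass forces the yes-case best NE to cost $\Omega(K^2)$ as well — the same order as the no-case pile-up $1+2+\cdots+K$. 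The ratio between the two cases is therefore bounded by a constant independent of $K$, and no choice of $K=\lceil 2r\rceil$ or of blocker weights can fix this; making blockers tiny only increases their number without decreasing their total weight. A secondary (but also serious) issue is that you need $\Omega(K^2)$ worth of blocker mass to relocate onto the free machines precisely when no matching exists; the 3DM gadget of Theorem~\ref{thm:hardApproxCmax} conditionally moves only a single job ($b$), and you give no mechanism for conditionally moving an entire population.

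The paper's proof avoids both problems by abandoning the quadratic pile-up entirely. It adds a set $K$ of $k$ jobs of processing time $\epsilon$ and arranges the priority lists so that in the yes-case they all run at the head of $M_1$ (each finishing in time $O(k\epsilon)$), while in the no-case job $b$ is driven onto $M_1$ ahead of them and, on every other machine, they sit behind one of the $m$ unit jobs that the instance needs anyway. Thus each of the $k$ padding jobs is delayed by exactly \emph{one} unit job rather than by $K-1$ peers, the blocking mass is only $m$ units (present and paid for in both cases), and the resulting gap $\frac{m+k}{m+1}$ grows without bound in $k$ while the instance size grows only linearly. If you want to salvage your write-up, replace the $\Theta(K)$-vs-$\Theta(K^2)$ mechanism with this one-unit-delay mechanism.
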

\begin{proof}
Given $m,r$, let $k$ be a large integer such that $\frac{m+k}{m+1} >r$. 
We show that for every $r>1$, there is an instance on $m$ identical machines for which it is NP-hard to decide whether the game has an NE profile with sum of completion times at most $m+1$ or more than $m+k$.

The hardness proof is, again, by a reduction from 
$3$-bounded $3$-dimensional matching ($3$DM-$3$).
Given an instance of $3$DM-$3$ and $r>1$, consider the following game on $m=|T|+2$ machines, $M_1,M_2,\ldots,M_{|T|+2}$. 
Recall that $k$ satisfies $\frac{m+k}{m+1} >r$. Let $\epsilon>0$ be a small constant, such that $(k^2+3k+6m)\epsilon < 1$. 
The set of jobs includes job $a$ with processing time $\epsilon$, job $b$ with processing time $1$, a set $D$ of $|T|-n$ dummy jobs with processing time $3\epsilon$, two dummy jobs $d_1,d_2$ with processing time $2\epsilon$, a set $U$ of $m-1$ unit jobs, $3n$ jobs with processing time $\epsilon$ - one for each element in $X \cup Y \cup Z$, and a set $K$ of $k$ jobs with processing time $\epsilon$. Note that there are exactly $m$ unit jobs (the job $b$ and the jobs of $U$), while all other jobs have $O(\epsilon)$ processing time.

We turn to describe the priority lists. Note that, when the list includes a set, it means that the set elements appear in arbitrary order. The symbol $\phi$ means that the remaining jobs appear in arbitrary order. For the first machine, $\pi_1=(d_1,b,a,K,U,\phi)$. 
For the second machine, $\pi_2=(d_2,X,Y,Z,b,U,a,K,\phi)$.
The $m-2$ right machines are {\em triplet-machines}. 
For every $t=(x_i,y_j,z_k) \in T$, the priority list of the triplet-machine corresponding to $t$ is $(D,x_i,y_j,z_k,a,U,K,\phi)$. 

The heart of the reduction lies in determining the priority lists. The idea is that if a $3$D-matching exists, then job $b$ would not prefer $M_1$ over $M_2$. This will enable job $a$ and all the tiny jobs of $K$ to be assigned early on $M_1$ each having completion time at most $(k+3)\epsilon$. However, if a $3$D-matching does not exist, then some job originated from the elements in $X \cup Y \cup Z$ will precede job $b$ on $M_2$, and $b$'s best-response would be on $M_1$. The jobs in $U$ have higher priority than $a$ and $K$ on $M_2$, thus, on every machine there would be at least one job of length $1$ that precedes the jobs of $K$, implying that the sum of completion times will be more than $m+k$.

Observe that in any NE, the two dummy jobs with processing time $2\epsilon$ are assigned as the first jobs on $M_1$ and $M_2$. Also, the dummy jobs in $D$ have the highest priority on the triplet-machines, thus, in every NE, there are $|D|=|T|-n$ triplet-machines on which the first job is from $D$. 

Figure~\ref{fig:hardCsum} provides an example for $m=5$.

\begin{figure*}[ht]
\begin{center}
\includegraphics[height=4.3cm]{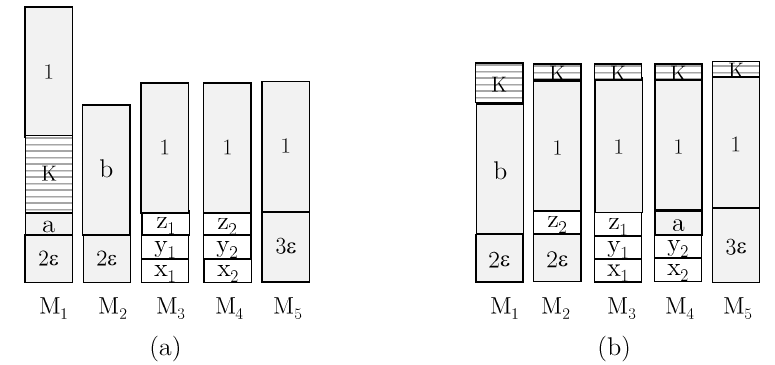}
\end{center}
\caption{Let $n=2$ and $T=\{(x_1,y_1,z_1),(x_2,y_2,z_2),(x_1,y_2,z_2)\}$.
 (a) an NE given the matching $T'=\{(x_1,y_1,z_1),(x_2,y_2,z_2)\}$. The jobs of $K$ start their processing at time $3\epsilon$. (b) an NE if a matching of size $2$ does not exist. Job $z_2$ selects $M_2$, thus, job $b$ prefers $M_1$ over $M_2$. The jobs of $K$ start their processing at time at least $1+2\epsilon$.}
\label{fig:hardCsum}
\end{figure*}

The following claims prove the lower and the upper threshold in the gap instance of the scheduling problem.
\begin{claim}\label{clm:1:hardApproxSumC}
If a $3$D-matching of size $n$ exists, then there is an NE with sum of completion time at most $m + 1$.
\end{claim}
\begin{proof}
Let $T'$ be a matching of size $n$. Assign the jobs of $X \cup Y \cup Z$ on the triplet-machines corresponding to $T'$ and the jobs of $D$ on the remaining triplet-machines. Also, assign $d_1$ and $d_2$ on $M_1$ and $M_2$ respectively. $M_1$ and $M_2$ now have load $2\epsilon$ while the triplet machines have load $3\epsilon$. Next, assign job $a$ and the jobs of $K$ on $M_1$, and job $b$ on $M_2$. Finally, add one unit-job on each triplet-machine and on $M_1$. It is easy to verify that the resulting assignment is an NE. The jobs of $K$ are not delayed by unit jobs, so each of them completes at time at most $(k+3)\epsilon$. The other jobs with processing time $O(\epsilon)$ contributes at most $6\epsilon$ to the sum of completion times on every machine, and every unit job completes at time $1+O(\epsilon)$. Thus, the sum of completion times is $m+f(\epsilon)$, where $\epsilon$ was chosen such that $f(\epsilon)<1$.
\end{proof}

\begin{claim}\label{clm:2:hardApproxSumC}
If there is an NE with sum of completion times less than $m+k$, then there exists a $3$D-matching of size $n$.
\end{claim}
\begin{proof}
Let $\sigma$ be an NE whose sum of completion times is less than $m+k$. 
There are $m$ unit jobs, and in any NE, each is processed on a different machine, as otherwise, some machine has load $f(\epsilon)$, and the second unit job on a machine has a beneficial migration. 
In order to have sum of completion times less than $m+k$, at least one job from $K$ is not assigned after a unit job.  The only machine on which jobs from $K$ may precede a unit job is $M_1$, where jobs of $K$ may precede a job from $U$. This is possible only if job $b$ is not processed on $M_1$. Job $b$ does not prefer $M_1$ over $M_2$ only if it starts its processing right after job $d_2$ on $M_2$. Since the jobs of $X \cup Y \cup Z$ have higher priority than job $b$ on $M_2$, they are all assigned on triplet-machines and starts their processing after jobs of total processing time at most $2\epsilon$. Thus, every triplet machine processes at most three jobs of $X \cup Y \cup Z$ - the jobs corresponding to the triplet, whose priority is higher than the priority of the unit-jobs of $U$. Moreover, since the jobs of $D$ have higher priority on the triplet-machines, there are $|T|-n$ triplet-machines on which the jobs of $D$ are first, and exactly $n$ machines each processing exactly the three jobs corresponding to the machine's triplet. Thus, the assignment of the jobs from $X \cup Y \cup Z$ on the triplet-machines induces a matching of size $n$.
\end{proof}

Claims \ref{clm:1:hardApproxSumC} and \ref{clm:2:hardApproxSumC} conclude the proof of theorem.
\end{proof}

Note that, given $m,r$, the game built in the reduction has $n < (r+3)m$ jobs. That is, $r > \frac{n}{m} -3$. Also, as shown in Theorem \ref{thm:ineffc1}, for the sum of completion times objective, PoA$(\G_{3})\le \frac {n-1} m +1$. Thus, the above analysis shows that up to a small additive constant, it is NP-hard the compute an NE that approximates the optimal sum of completion time better than the PoA.

\section{Conclusion and Open Problems}
Traditional analysis of coordination mechanisms assumes that jobs assigned to some machine are processed according to some policy, such as shortest or longest processing time. In this paper we explored the effect of having a different policy, given by an arbitrary priority list, for every machine. We showed that in general, an NE schedule may not exist, and it is NP-hard to identify whether a given game has an NE. On the other hand, for several important classes of instances, we showed that an NE exists and can be computed efficiently, and we bounded the equilibrium inefficiency with respect to the common measures of minimum makespan and sum of completion times. We also showed that natural dynamics converge to an NE for all these classes. 
In terms of computational complexity, we proved that even for the simple class of identical machines, for which an NE can be computed efficiently, it is NP-hard to compute an NE whose quality is better than the quality of the worst NE.

Our work leaves open several interesting directions for future work.
\begin{itemize}
    \item To the best of our knowledge, the problem of computing the social optimum of an instance is a new variant of scheduling with precedence constraints, that has not been studied before. The main difference from classical scheduling with precedence constraints is that 
    a priority list determine the scheduling priority for jobs on a specific machine, rather than for the entire schedule.
    Therefore, it is not possible to adopt known ideas and techniques.
    \item Since our game may not have an NE, it is natural to consider weaker notions of stability. In particular, for a parameter $\alpha \ge 1$, a profile is an $\alpha$-approximate NE if no job can change strategy such that the cost reduces by factor at least $\alpha$~\cite{Tim_Eva}. The existence and calculation of approximate NE profiles is still open. 
    \item A natural generalization is to consider games in which jobs have an arbitrary strategy space, and the cost of a job is the sum of the cost for the resources used, where each resource has its own priority list.
        
    
\end{itemize}

%
%
\bibliographystyle{abbrv}
\bibliography{bib}

\end{document}